\numberwithin{equation}{section}
\newtheorem{Theorem}{Theorem}[section]
\newtheorem*{Theorem*}{Theorem}
\newtheorem{Lemma}[Theorem]{Lemma}
\newtheorem{Proposition}[Theorem]{Proposition}
 { \theoremstyle{definition}
\newtheorem{Definition}[Theorem]{Definition}

 }
\begin{document}

\allowdisplaybreaks

\newcommand{\arXivNumber}{2402.08944}

\renewcommand{\PaperNumber}{085}

\FirstPageHeading

\ShortArticleName{The Racah Algebra of Rank 2: Properties, Symmetries and Representation}

\ArticleName{The Racah Algebra of Rank 2:\\ Properties, Symmetries and Representation}

\Author{Sarah POST and S\'{e}bastien BERTRAND}
\AuthorNameForHeading{S.~Post and S.~Bertrand}
\Address{Department of Mathematics, University of Hawai`i at M\={a}noa, Honolulu, Hawai`i, USA}
\Email{\href{mailto:spost@hawaii.edu}{spost@hawaii.edu}, \href{mailto:sbertran@hawaii.edu}{sbertran@hawaii.edu}}

\ArticleDates{Received March 08, 2024, in final form September 10, 2024; Published online September 22, 2024}

\Abstract{The goals of this paper are threefold. First, we provide a new ``universal'' definition for the Racah algebra of rank 2 as an extension of the rank-1 Racah algebra where the generators are indexed by subsets and any three disjoint indexing sets define a subalgebra isomorphic to the rank-1 case. With this definition, we explore some of the properties of the algebra including verifying that these natural assumptions are equivalent to other defining relations in the literature. Second, we look at the symmetries of the generators of the rank-2 Racah algebra. Those symmetries allows us to partially make abstraction of the choice of the generators and write relations and properties in a different format. Last, we provide a novel representation of the Racah algebra. This new representation requires only one generator to be diagonal and is based on an expansion of the split basis representation from the rank-1 Racah algebra.}

\Keywords{Racah algebra; representation; symmetry; rank-2}

\Classification{81R12; 20C35}

\section{Introduction}

The Racah algebra and its generalization has recently been the subject of interest in representation theory and in connection with (multivariate) orthogonal polynomials, see for example the recent review article \cite{Vinet2020} and references therein. The title of the algebra refers to Giulio Racah, who defined what are now known as Racah coefficients to express the recoupling of triples of angular momentum \cite{Racah}. The terminology is commonly used now to express recoupling coefficients of triples of various irreducible representations, including of quantum groups \cite{Kirillov, Levy,Louck}. Back in the Istanbul Summer School \cite{Racah2}, Racah even proposed an algebra that was closing of degree~3, compared to the Racah algebra that closes quadratically. Later the Racah polynomials, in particular their quantum versions, were defined by Askey and Wilson explicitly as {\it ``A set of orthogonal polynomials that generalize the Racah coefficients or $6$-j symbols''} \cite{Askey}.

As is well known thanks to the work of Zhedanov, bispectral orthogonal polynomials admit quadratic algebras \cite{GLZ} and the Racah algebra is the algebra associated with the classical ${(q=1)}$ Racah polynomials. Again, it was Zhedanov \cite{GZ88} who drew an explicit connection between the coefficients of Racah and the symmetry algebra. Now we understand the Racah algebra and the Racah polynomials in particular in terms of tensor products of $\mathfrak{sl}_2(\mathbb{C})$ and its real form~$\mathfrak{su}(1,1)$~\cite{GVZ2013,GZ93, SP}. Due to the inductive nature of the tensor product, this construction naturally generalizes to higher rank \cite{BGVV2018,Vinet2020} and can be seen as the commutant of the diagonal embedding of $\mathfrak{sl}_2(\mathbb{C})$ into the $n$-fold tensor product of its universal algebra \cite{CGPAV22}.

From a different perspective, there has been work to define a ``universal'' version of the Racah algebra \cite{BCH2020,CGPAV22}, divorced from its realization in terms of tensor products, and to understand its representation theory \cite{HBC2020}. This paper follows in this vein. In particular, we define the rank-2 Racah algebra $R(4)$ as an associative algebra with a set of generators and relations and show that it is equivalent to the definition given in \cite{CGPAV22}. In addition, we use the Jacobi identity to recover relation (3.9) of \cite{CGPAV22} and show that the Jacobi identity for all commutation relations is satisfied modulo this relation.

As with the motivation for the rank-1 case, we are interested in understanding the representation theory of this algebra. To this end, we discuss possible choice of linearly independent generator and give a novel representation based on the split basis representation of the rank-1 algebra \cite{HBC2020}.

The manuscript is organized as follows. In each section of the core of this manuscript, we first review some characteristics of the rank-1 Racah algebra before tackling the rank-2 case.
In Section \ref{defn}, we propose a new universal definition of the rank-2 Racah algebra as an extension of the rank-1 case, and we explore its properties and relations. In Section \ref{gen}, we consider a special set of generators to construct the rank-2 Racah algebra, the contiguous basis. The goal of this section is to partially make abstraction of the choice of generator using discrete symmetries. In addition, we use these symmetries to further explore the properties of the Racah algebra of rank~2. In the last core section, Section \ref{repn}, we provide the assumptions of how we obtained a~new representation. The explicit coefficients of the representation can be found in Appendix \ref{AppCoeff}, and some proofs from Section \ref{defn} are provided in Appendix \ref{AppendixA}.

\section{Definition and properties of Racah algebras}
\label{defn}
\subsection{The rank-1 Racah algebra}
\begin{Definition}
The Racah algebra of rank~$1$, denoted $R(3)$ is defined to be the unital, associative algebra over a field $\mathcal{K}$ with generators $C_{I}$ with $I \subset \{ 1,2,3\}$ satisfying the following relations:
\begin{gather}
[C_{i},C_{I}] = 0 \quad \mbox{for all}\quad i=1,2,3, \qquad [C_{123},C_{I}] = 0,\label{central1}\\
C_{123}=C_{12}+C_{23}+C_{13}-C_1 -C_2-C_3, \label{R3decom}\\
\tfrac{1}{2} [C_{jk},[C_{ij}, C_{jk}]]= C_{ik}C_{jk}-C_{jk}C_{ij}+(C_k-C_j)(C_i-C_{ijk}), \qquad i\neq j \neq k\neq i,\label{quad3}
\end{gather}
where $[\cdot,\cdot]$ is the regular commutator.
\end{Definition}

 To simplify the notation, we omit the set notation and write $C_{ij} = C_{\{ i,j\}}$, although it is important to remember that the generators $C_I$ are symmetric in the indices. Indeed, it is worth noting that the set of equations \eqref{central1}--\eqref{quad3} are invariant under the group of permutation $\mathcal{P}_3$ acting on the indices. In addition, the maximal elements $C_{123}$ and the elements with one index, $C_i$, $i=1,2,3$, are in the center of the algebra. Furthermore, one should note that the Racah algebra is not a Lie algebra as it closes quadratically.

Often, an additional linearly independent element is introduced to avoid commutators of commutators, that is,
\begin{equation}
D_{123}\equiv \tfrac 12 [C_{12}, C_{23}],
\end{equation}
which satisfies
\begin{equation}
D_{123}= \tfrac12 [C_{23}, C_{13}]= \tfrac12 [C_{13}, C_{12}],\label{D123alt}
\end{equation}
as a consequence of relations \eqref{central1}--\eqref{R3decom}.
With this definition, we can express \eqref{quad3} as
\begin{equation}
[C_{jk}, D_{ijk}] =C_{ik}C_{jk}-C_{jk}C_{ij}+(C_k-C_j)(C_i-C_{ijk}),\label{rank1inner}
\end{equation}
and cyclic permutations thereof. Note that unlike the $C$ generators, the element $D_{123}$ is not fully invariant under $\mathcal{P}_3$. Non-cyclic permutations of the indices 1, 2, 3 change the sign of the element $D_{123}$.

With this definition, it is possible to chose a linearly independent basis of generators, for example including the central elements $C_1$, $C_2$, $C_3$, $C_{123}$ and the non-central ones $C_{12}$, $C_{23}$, and~$D_{123}$.

\begin{Lemma}\label{JacobiRank1}
In the algebra $R(3)$, the Jacobi identity is satisfied without additional constraints.
\end{Lemma}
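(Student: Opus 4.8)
The plan is to verify the Jacobi identity $[[X,Y],Z]+[[Y,Z],X]+[[Z,X],Y]=0$ directly on the generators and to confirm that, once the defining relations \eqref{central1}--\eqref{quad3} (equivalently \eqref{rank1inner}) are substituted, every such expression collapses to $0$ \emph{identically}, so that no relation beyond those already imposed is produced. Since $R(3)$ is by construction an associative algebra, its commutator is bound to obey the Jacobi identity; the genuine content of the lemma is the sharper claim that testing Jacobi against the presentation forces no supplementary constraint, in contrast to the behaviour that will arise in the rank-$2$ setting where the analogous check recovers an extra relation.

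First I would dispose of the trivial triples. By \eqref{central1} the elements $C_1,C_2,C_3,C_{123}$ are central, so any triple containing one of them has two vanishing inner brackets and contributes $0$. It therefore suffices to test the Jacobi identity on triples drawn from the non-central generators $\{C_{12},C_{23},C_{13},D_{123}\}$. For the triple $(C_{12},C_{23},C_{13})$ I would use \eqref{D123alt} to rewrite each inner bracket as $2D_{123}$, so that the cyclic sum becomes
\[
2[D_{123},C_{13}]+2[D_{123},C_{12}]+2[D_{123},C_{23}]=2[D_{123},C_{12}+C_{23}+C_{13}].
\]
By \eqref{R3decom} the sum $C_{12}+C_{23}+C_{13}=C_{123}+C_1+C_2+C_3$ is central, hence this commutator vanishes.

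For the three remaining triples, each involving $D_{123}$, I would expand the inner brackets $[C_{jk},D_{123}]$ via \eqref{rank1inner}, then compute the outer commutators by the Leibniz rule, reducing every product bracket through \eqref{D123alt} and centrality. I expect each of these, after the quadratic terms in the $C_{ij}$ and the central inhomogeneous terms cancel, to funnel once more into a multiple of $[D_{123},C_{12}+C_{23}+C_{13}]$, which is again $0$; because the relations are invariant under the cyclic action of $\mathcal{P}_3$, the three cases amount to essentially the same computation.

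The main obstacle is the bookkeeping in these $D_{123}$-triples: expanding $[[C_{jk},D_{123}],C_{lm}]$ generates several quadratic monomials in the $C_{ij}$ together with commutators of central products, and one must track the signs carefully through \eqref{D123alt}. The conceptual payoff, which I would emphasize, is that every nontrivial Jacobi identity reduces to the single fact that $C_{12}+C_{23}+C_{13}$ is central; once this reduction is isolated, the conclusion that $R(3)$ imposes no additional relation is immediate.
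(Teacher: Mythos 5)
Your computations are all valid, and your case division (triples containing a central element are trivial; then the triple of distinct $C_{ij}$'s; then pairs of $C$'s with $D_{123}$) is exactly the paper's. The gap is in what your key reduction actually establishes. In the all-$C$ case you rewrite the Jacobi sum as $2[D_{123},C_{12}+C_{23}+C_{13}]$ and kill it by centrality via \eqref{R3decom}; the paper instead substitutes the quadratic relations \eqref{rank1inner} into $[C_{13},D_{123}]+[C_{12},D_{123}]+[C_{23},D_{123}]$ and verifies that the three right-hand sides cancel \emph{identically} (the quadratic terms pairwise, the central terms by commutativity). That cancellation is the actual content of ``without additional constraints,'' and your shortcut skips it. Indeed, by bilinearity of the bracket together with \eqref{central1} and \eqref{R3decom}, the presentation always forces $[C_{12},D_{123}]+[C_{23},D_{123}]+[C_{13},D_{123}]=0$; the nontrivial question is whether the expressions that \eqref{quad3} declares these brackets to equal also sum to zero identically. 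If they did not---say the three right-hand sides summed to $C_{12}-C_{23}$---your argument would go through verbatim, every Jacobi sum would still ``vanish,'' and yet the presentation would secretly impose the extra relation $C_{12}=C_{23}$. So your proof, as your own opening paragraph anticipates, only re-derives the part that is automatic in any associative algebra and cannot detect the kind of hidden constraint the lemma asserts is absent.

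The fix is small: perform the substitution the paper does. Replace each bracket in the sum $[C_{12},D_{123}]+[C_{23},D_{123}]+[C_{13},D_{123}]$ by its quadratic expression from \eqref{rank1inner} (and its cyclic images) and check that everything cancels identically. Your second case, two $C$'s with $D_{123}$, then closes exactly as you describe, since after expanding the inner brackets by \eqref{rank1inner} and applying Leibniz it funnels into that same sum. With that one computation added, your argument coincides with the paper's.
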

\begin{proof}
 We need to check only the non-central terms $C_{12}$, $ C_{23}$, $C_{13}$ and $D_{123}$. The first case to check is
\begin{gather*}
[C_{13},[C_{12}, C_{23}]]+ [C_{12},[C_{23}, C_{13}]]+[C_{23},[C_{13}, C_{12}]]\\
\qquad= 2 [C_{13}, D_{123}]+2[C_{12}, D_{123}]+2[C_{23}, D_{123}]\\
\qquad= 2 \left( C_{12}C_{13}-C_{13}C_{23}+(C_1-C_3)(C_2-C_{123}) + C_{23}C_{12}-C_{12}C_{13}\right.\\
\phantom{=}{}\qquad\left.+(C_2-C_1)(C_3-C_{123}) +C_{13}C_{23}-C_{23}C_{12}+(C_3-C_2)(C_1-C_{123})\right) = 0.
\end{gather*}
Here, we have used \eqref{rank1inner} and the symmetry of $D_{123}$ \eqref{D123alt}.

The second (and last) case to check is a pair of $C$'s with the $D$. Up to permutation, it is as follows:
\begin{gather*}
[C_{12},[C_{23}, D_{123}]]+ [C_{23},[D_{123}, C_{12}]]+[D_{123},[C_{12}, C_{23}]]\\
\qquad= [ C_{12},C_{13}C_{23}-C_{23}C_{12}]- [C_{23}, C_{23}C_{12}-C_{12}C_{13}]+0\\
\qquad= -D_{123}C_{23}+C_{13}D_{123}- D_{123} C_{12} +C_{23}D_{123} -D_{123} C_{13}+C_{12}D_{123}  \\
\qquad= [C_{13}, D_{123}]+[C_{12}, D_{123}]+[C_{23}, D_{123}] =0.
\end{gather*}
We have shown in the rank-1 Racah algebra, the commutation relations satisfy the Jacobi identity without further constraints.
\end{proof}

 We finish our discussion of the rank-1 case by noting that the algebra possesses a Casimir operator, that is,
\begin{gather}
\mathfrak{C}= D_{123}^2-\tfrac{1}{2}\big\lbrace C_{12}^2,C_{23}\big\rbrace-\tfrac{1}{2}\big\lbrace C_{23}^2,C_{12}\big\rbrace+C_{12}^2+C_{23}^2+\lbrace C_{12},C_{23}\rbrace\nonumber\\
\hphantom{\mathfrak{C}=}{}+\tfrac{1}{2}(C_1+C_2+C_3+C_{123})(\lbrace C_{12},C_{23}\rbrace-2C_{12}-2C_{23})\nonumber\\
\hphantom{\mathfrak{C}=}{}+(C_2-C_3)(C_{123}-C_1)C_{12}+(C_2-C_1)(C_{123}-C_3)C_{23}\nonumber\\
\hphantom{\mathfrak{C}=}{}+(C_1+C_3)(C_{123}+C_2)+(C_1C_3-C_{123}C_2)(C_{123}-C_1+C_2-C_3),\label{Casmir1}
\end{gather}
where the last line involves only central elements. If one uses the $\mathfrak{su}(1,1)$ approach to construct a~representation of a Racah algebra, see, e.g., \cite{GVZ2013}, then it has been shown that the Casimir operator will take a value of 0, which corresponds to the so-called special Racah. However, in this paper, we will not start from $\mathfrak{su}(1,1)$, but from a definition of the algebra, both for the rank-1 and rank-2 cases.

\subsection{The rank-2 and higher Racah algebras}

For the higher-rank extension, we extend the indexing set in the natural manner and replace the decomposition \eqref{R3decom} and quadratic relation \eqref{quad3} with multi-indices. The commutation relations are extended in a manner that agrees with the coupling structure of the tensor product. That is, for any triple of disjoint subsets $I$, $J$, $K$, the subalgebra generated will be isomorphic to the rank-1 case. This would be analogous to reducing these factors into irreducible components and then taking their tensor product.

\begin{Definition}
The Racah algebra $R(n)$ is defined to be the unital associative algebra over a~field~$\mathcal{K}$ with generators $C_{I}$ with $I \subset \{ 1,2,3, \dots, {n}\}$ satisfying the following relations:
\begin{gather}
[C_{I},C_{J}] = 0\quad \mbox{for all}\quad I \subset J\quad \mbox{or}\quad I \cap J = \varnothing, \label{central2}\\
C_{IJK}=C_{IJ}+C_{JK}+C_{IK}-C_I -C_J-C_K, \label{R4decom}\\
\tfrac12 [C_{JK},[C_{IJ}, C_{JK}]]= C_{IK}C_{JK}-C_{JK}C_{IJ}+(C_K-C_J)(C_I-C_{IJK}) \label{quad4},
\end{gather}
where $I$, $J$ and $K$ are disjoint subsets in equations \eqref{R4decom}, \eqref{quad4}. For notational convenience, we omit the set notation and union sign and express $IJ=I \cup J$.
\end{Definition}
\begin{Lemma} \label{permutationsymofDs}
The following relations hold in the algebra $R(n)$:
\[
[C_{IJ}, C_{JK}]=[C_{KI}, C_{IJ}] =[C_{JK}, C_{KI}].
\]
\end{Lemma}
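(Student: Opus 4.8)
The plan is to mirror the rank-1 derivation of the symmetry \eqref{D123alt}, which the text attributes to relations \eqref{central1}--\eqref{R3decom}; here the analogous ingredients are the centrality relation \eqref{central2} together with the decomposition \eqref{R4decom}. The central observation is that, since $IJ$, $JK$, and $IK$ are each a subset of the full index set $IJK$, relation \eqref{central2} forces each of $C_{IJ}$, $C_{JK}$, $C_{IK}$ to commute with $C_{IJK}$. I would then turn each of these three vanishing commutators into a linear relation among the commutators appearing in the statement.

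First I would compute $[C_{IJ}, C_{IJK}] = 0$ and substitute \eqref{R4decom} for $C_{IJK}$. The term $[C_{IJ}, C_{IJ}]$ vanishes trivially; the terms $[C_{IJ}, C_I]$ and $[C_{IJ}, C_J]$ vanish by the containment clause of \eqref{central2} (since $I, J \subset IJ$); and $[C_{IJ}, C_K]$ vanishes by the disjointness clause (since $IJ \cap K = \varnothing$, using that $I$, $J$, $K$ are mutually disjoint). What survives is
\[
[C_{IJ}, C_{JK}] + [C_{IJ}, C_{IK}] = 0,
\]
so that, using antisymmetry of the commutator and the symmetry $C_{IK} = C_{KI}$ of the generators in their indices, $[C_{IJ}, C_{JK}] = [C_{KI}, C_{IJ}]$. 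This is the first claimed equality. Repeating the identical expansion for $[C_{JK}, C_{IJK}] = 0$ and $[C_{IK}, C_{IJK}] = 0$ produces two further linear relations among the three commutators $[C_{IJ}, C_{JK}]$, $[C_{JK}, C_{KI}]$, $[C_{KI}, C_{IJ}]$; combining all three (again via antisymmetry and index symmetry) forces them to coincide.

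I do not expect a genuine obstacle, as the argument is short and purely algebraic. The only point requiring care is the bookkeeping of \emph{which} clause of \eqref{central2} kills each term: every vanishing commutator must be justified either by a containment such as $I \subset IJ$ or by a disjointness such as $IJ \cap K = \varnothing$, and it is precisely the hypothesis that $I$, $J$, $K$ are mutually disjoint that makes both applicable wherever needed. A secondary check is to confirm that the three derived relations are mutually consistent rather than overdetermined, i.e.\ that they collapse to the single chain of equalities instead of forcing all three commutators to vanish; writing the three relations as a small linear system in the three commutators makes this transparent.
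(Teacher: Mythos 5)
Your proof is correct and takes essentially the same approach as the paper: the paper's own proof is just the one-line remark that the lemma ``follows directly from the commutation \eqref{central2} and decomposition relations \eqref{R4decom}'', and your expansion of $[C_{IJ},C_{IJK}]=0$ (and its analogues) via \eqref{R4decom}, with each surviving term killed by the containment or disjointness clause of \eqref{central2}, is precisely the computation being alluded to.
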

\begin{proof}
 This follows directly from the commutation \eqref{central2} and decomposition relations \eqref{R4decom}.
 \end{proof}
\begin{Lemma} The following relation holds in the algebra $R(4)$:
\begin{equation}
\tfrac12 [C_{KI},[C_{IJ}, C_{JK}]]= C_{IJ}C_{KI}-C_{KI}C_{JK}+(C_I-C_K)(C_J-C_{IJK}). \label{quad4b}
\end{equation}
\end{Lemma}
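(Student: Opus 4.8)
The plan is to derive \eqref{quad4b} directly from the defining quadratic relation \eqref{quad4}, exploiting the fact that the latter holds for an arbitrary ordered triple of pairwise disjoint subsets. Since the roles of $I$, $J$, $K$ in \eqref{quad4} are merely placeholders for three disjoint sets, I am free to apply the cyclic relabeling $(I,J,K)\mapsto(J,K,I)$ to every term, and the target relation \eqref{quad4b} should then emerge after a single application of Lemma~\ref{permutationsymofDs}. No new commutator computation and no appeal to the Jacobi identity will be required; this also matches the expectation that \eqref{quad4b}, involving only the three disjoint sets $I$, $J$, $K$, lives inside the rank-$1$ subalgebra they generate and hence must follow purely from the rank-$1$ relations.

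Concretely, I would first rewrite \eqref{quad4} under the substitution $I\to J$, $J\to K$, $K\to I$. The outer generator $C_{JK}$ becomes $C_{KI}$, which is precisely the generator appearing on the left-hand side of \eqref{quad4b}; the inner commutator $[C_{IJ},C_{JK}]$ becomes $[C_{JK},C_{KI}]$; on the right-hand side the two products transform into $C_{IJ}C_{KI}-C_{KI}C_{JK}$, while the central factor $(C_K-C_J)(C_I-C_{IJK})$ becomes $(C_I-C_K)(C_J-C_{IJK})$. Here I would use that $C_{IJK}$ is invariant under any permutation of $I$, $J$, $K$, which is immediate from the symmetry of the generators together with the decomposition \eqref{R4decom}. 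This yields an intermediate identity for $\tfrac12[C_{KI},[C_{JK},C_{KI}]]$.

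The final step is to rewrite the inner commutator: by Lemma~\ref{permutationsymofDs} one has $[C_{JK},C_{KI}]=[C_{IJ},C_{JK}]$, so the left-hand side of the intermediate identity equals exactly $\tfrac12[C_{KI},[C_{IJ},C_{JK}]]$, and \eqref{quad4b} follows at once. The only point requiring care is the index and sign bookkeeping under the relabeling — in particular verifying that $C_{IK}\mapsto C_{IJ}$ and tracking which factor of the central product flips — but this is purely mechanical, so I expect the argument to be a short two-line computation rather than anything substantive.
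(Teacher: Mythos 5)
Your proposal is correct and takes essentially the same approach as the paper: the paper's proof likewise applies the cyclic relabeling $(I,J,K)\mapsto(J,K,I)$ to \eqref{quad4} to obtain the intermediate identity $\tfrac12 [C_{KI},[C_{JK}, C_{KI}]]= C_{IJ}C_{KI}-C_{KI}C_{JK}+(C_I-C_K)(C_J-C_{IJK})$, and then invokes Lemma~\ref{permutationsymofDs} to replace the inner commutator $[C_{JK},C_{KI}]$ by $[C_{IJ},C_{JK}]$. Your index bookkeeping (including $C_{IK}\mapsto C_{IJ}$ and the permutation invariance of $C_{IJK}$) matches the paper's computation exactly.
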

\begin{proof}
 From \eqref{quad4} with $K\rightarrow I \rightarrow J$, we have
\[\tfrac12 [C_{KI},[C_{JK}, C_{KI}]]= C_{IJ}C_{KI}-C_{KI}C_{JK}+(C_I-C_K)(C_J-C_{IJK}).\]
The resulting identity follows from the relations in Lemma \ref{permutationsymofDs}.
\end{proof}

\begin{Lemma} \label{JacobiC} The quadratic relations \eqref{quad4} and \eqref{quad4b} satisfy the Jacobi identity.
\end{Lemma}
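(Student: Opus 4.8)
The plan is to verify the Jacobi identity on the triple $(C_{IJ}, C_{JK}, C_{KI})$ for pairwise disjoint $I$, $J$, $K$, whose double commutators are exactly what \eqref{quad4} and \eqref{quad4b} control; this mirrors the first case treated in Lemma~\ref{JacobiRank1}. Because $R(4)$ is an associative algebra, its bracket already satisfies the Jacobi identity formally, so the genuine content of the statement is that substituting the defining quadratic relations into the cyclic sum collapses it to $0=0$ and thus imposes no new constraint.

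First I would use Lemma~\ref{permutationsymofDs} to collapse the three inner commutators to a single element. Setting $D_{IJK} \equiv \tfrac12 [C_{IJ}, C_{JK}]$, that lemma gives $D_{IJK} = \tfrac12[C_{JK}, C_{KI}] = \tfrac12[C_{KI}, C_{IJ}]$, so the cyclic sum
\[
[C_{IJ},[C_{JK}, C_{KI}]] + [C_{JK},[C_{KI}, C_{IJ}]] + [C_{KI},[C_{IJ}, C_{JK}]]
\]
reduces to $2\big([C_{IJ}, D_{IJK}] + [C_{JK}, D_{IJK}] + [C_{KI}, D_{IJK}]\big)$, and it remains to show this single-bracket sum vanishes.

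Next I would feed in the three needed commutators: \eqref{quad4} supplies $[C_{JK}, D_{IJK}]$, \eqref{quad4b} supplies $[C_{KI}, D_{IJK}]$, and applying the cyclic permutation $I \to J \to K \to I$ to \eqref{quad4b} (legitimate since it is itself the cyclic image of \eqref{quad4}) supplies $[C_{IJ}, D_{IJK}] = C_{JK}C_{IJ} - C_{IJ}C_{KI} + (C_J - C_I)(C_K - C_{IJK})$. Summing the three, the quadratic monomials telescope: each ordered product $C_{JK}C_{IJ}$, $C_{IJ}C_{KI}$, $C_{KI}C_{JK}$ appears once with each sign and cancels, leaving only the central line $(C_J - C_I)(C_K - C_{IJK}) + (C_K - C_J)(C_I - C_{IJK}) + (C_I - C_K)(C_J - C_{IJK})$, which must be shown to vanish on its own.

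The sole point needing care, and the only real departure from the rank-1 argument, is that $C_{IJK} = C_{I \cup J \cup K}$ is in general \emph{not} central in $R(4)$, so the central line cannot be treated as a computation in a commutative ring. This is harmless: by \eqref{central2} each single-index generator $C_I$, $C_J$, $C_K$ commutes with every generator (any singleton is contained in or disjoint from any set), and these are the only factors multiplying $C_{IJK}$. Collecting the $C_{IJK}$ terms therefore gives $-\big[(C_J - C_I)+(C_K - C_J)+(C_I - C_K)\big]C_{IJK} = 0$, while the remaining terms are products of mutually commuting central generators and cancel pairwise. Hence no constraint on $C_{IJK}$ is required and the Jacobi identity holds identically. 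I expect the sign bookkeeping in the telescoping quadratic part to be the most error-prone step, though it is entirely mechanical once the three commutators are written consistently.
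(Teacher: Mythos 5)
Your proposal is correct and follows essentially the same route as the paper: the paper likewise substitutes \eqref{quad4}, \eqref{quad4b} and the cyclic image of \eqref{quad4b} (via Lemma~\ref{permutationsymofDs}) into the cyclic sum for the triple $(C_{KI},C_{IJ},C_{JK})$ and watches the quadratic and central terms cancel; it additionally records the degenerate repeated-entry case $[C_{JK},[C_{IJ},C_{JK}]]+[C_{IJ},[C_{JK},C_{JK}]]+[C_{JK},[C_{JK},C_{IJ}]]=0$ for \eqref{quad4}, which you omit but which vanishes by antisymmetry alone. One correction to the justification in your final step: $I$, $J$, $K$ are disjoint \emph{subsets}, not singletons (in $R(4)$ one of them may be a pair, and, e.g., $C_{34}$ does not commute with every generator), so the argument cannot appeal to singletons being central; what you actually need --- that $C_I$, $C_J$, $C_K$ pairwise commute and that each commutes with $C_{IJK}$ --- follows directly from \eqref{central2} because the three sets are mutually disjoint and each is contained in $IJK$.
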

\begin{proof} This proof is directly analogous to Lemma \ref{JacobiRank1} with indices replaced by subset. Relation~\eqref{quad4} is a straightforward computation that follows from symmetry of the commutator
\[ [C_{JK},[C_{IJ}, C_{JK}]]+ [C_{IJ},[C_{JK}, C_{JK}]]+[C_{JK},[C_{JK},C_{IJ}]]=0.\]
Relation \eqref{quad4b} is less obvious. We check
\begin{gather*}
[C_{KI},[C_{IJ}, C_{JK}]]+ [C_{IJ},[C_{JK}, C_{KI}]]+[C_{JK},[C_{KI}, C_{IJ}]]\\
\qquad= 2  ( C_{IJ}C_{KI}-C_{KI}C_{JK}+(C_I-C_K)(C_J-C_{IJK}) + C_{JK}C_{IJ}-C_{IJ}C_{KI}  \\
\phantom{\qquad=}{}   +(C_J-C_I)(C_K-C_{IJK})+C_{KI}C_{JK}-C_{JK}C_{IJ}+(C_K-C_J)(C_I-C_{IJK}) ) = 0.
\end{gather*}
Thus the triple commutation of any of the generators satisfies the Jacobi identity identically, as expected in an associative algebra.
\end{proof}

As in the rank-one case, it is often useful to include additional generators so that the quadratic relations \eqref{quad4} become commutators. These new elements are
\[
 D_{ijk}\equiv\tfrac12 [C_{ij}, C_{jk}], \qquad 1\leq i< j< k \leq n.
\]
Note that unlike the $C$'s, these elements are defined for a trio of distinct \emph{ordered} indices. However, we see that they are invariant under cyclic permutations as a result of Lemma~\ref{permutationsymofDs}
\[
D_{ijk}=\tfrac12 [C_{jk}, C_{ik}]=\tfrac12 [C_{ik}, C_{ij}].
\]

As has been observed by others, it can simplify expressions to introduce new shifted generators{\samepage
\begin{gather}
P_{ij} =C_{ij}-C_i-C_j, \qquad i,j\in \{ 1,2,3,4\}.\label{P's}
\end{gather}
Note that with this definition the choice of repeated subscript $P_{ii} = -C_{i}$.}

With these definitions, we can express the quadratic relations for singleton subsets as
\begin{equation}
 [P_{jk},D_{ijk}]= P_{jk}P_{ki}-P_{ij}P_{jk}-2P_{j}P_{ki} +2P_{ij}P_k. \label{intquad}
 \end{equation}
We will sometimes refer to these as interior quadratic relations because they agree with the rank-one quadratic relations restricted to the subset $\{ i,j,k\}$. However, they do not span the full set of commutation relations. In particular, they do not include commutators of the form~$[P_{k\ell}, D_{ijk}]$ for distinct choices of $i$, $j$, $k$, $\ell$. These can be obtained by taking appropriate linear combinations of~\eqref{quad4} and higher-order implications of the commutation \eqref{central2} and decomposition relations~\eqref{R4decom}.

For example, we can obtain the following identity using the commutation \eqref{central2} and decomposition relations \eqref{R4decom}.
\begin{Lemma}\label{PDcomm}
The following commutation relations hold in $R(n)$:
\[
  [C_{k\ell}, D_{ijk}] +[C_{k\ell} ,D_{ij\ell}]=0.
  \]
\end{Lemma}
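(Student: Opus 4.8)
The plan is to expand the $D$ generators using their definition and then reduce everything to commutators of $C$ generators, where the decomposition relation \eqref{R4decom} and the commutation relations \eqref{central2} do the work. Writing $D_{ijk}=\tfrac12[C_{ij},C_{jk}]$ and $D_{ij\ell}=\tfrac12[C_{ij},C_{j\ell}]$, the quantity to be shown to vanish is $\tfrac12\big([C_{k\ell},[C_{ij},C_{jk}]]+[C_{k\ell},[C_{ij},C_{j\ell}]]\big)$. Since the commutator $[C_{k\ell},\,\cdot\,]$ is linear, this equals $\tfrac12\big[C_{k\ell},\,[C_{ij},C_{jk}]+[C_{ij},C_{j\ell}]\big]=\tfrac12\big[C_{k\ell},\,[C_{ij},\,C_{jk}+C_{j\ell}]\big]$, so the whole problem collapses to understanding the single element $C_{jk}+C_{j\ell}$ inside the inner commutator.

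The key step is to rewrite $C_{jk}+C_{j\ell}$ using the decomposition relation \eqref{R4decom} applied to the disjoint triple $\{j\},\{k\},\{\ell\}$, namely $C_{jk\ell}=C_{jk}+C_{k\ell}+C_{j\ell}-C_j-C_k-C_\ell$. Solving for the sum of interest gives $C_{jk}+C_{j\ell}=C_{jk\ell}-C_{k\ell}+C_j+C_k+C_\ell$. Substituting this into the inner commutator $[C_{ij},\,C_{jk}+C_{j\ell}]$, the singleton terms $C_j,C_k,C_\ell$ are central by \eqref{central2} and drop out, leaving $[C_{ij},\,C_{jk\ell}-C_{k\ell}]=[C_{ij},C_{jk\ell}]-[C_{ij},C_{k\ell}]$. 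Now each piece is controlled by \eqref{central2}: since $\{i,j\}\subset\{j,k,\ell\}\cup\{i\}$—more precisely $\{i,j\}\subset\{i,j,k,\ell\}$ with the subset condition met against $C_{jk\ell}$ only up to reordering, so here I would instead observe that $\{k,\ell\}$ is disjoint from $\{i,j\}$, giving $[C_{ij},C_{k\ell}]=0$ directly—and the remaining term $[C_{ij},C_{jk\ell}]$ must be handled with care.

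The main obstacle is the term $[C_{ij},C_{jk\ell}]$, since $\{i,j\}$ is neither a subset of $\{j,k,\ell\}$ nor disjoint from it, so \eqref{central2} does not apply directly. The clean way forward is to avoid isolating this term: instead of splitting $C_{jk\ell}-C_{k\ell}$, I would feed the combination back into the outer bracket and use the Jacobi identity together with Lemma \ref{permutationsymofDs}. Concretely, after the substitution the target becomes $\tfrac12\big[C_{k\ell},[C_{ij},C_{jk\ell}]\big]$ (the $[C_{ij},C_{k\ell}]$ piece having vanished), and since $C_{jk\ell}$ commutes with $C_{k\ell}$ by \eqref{central2} (as $\{k,\ell\}\subset\{j,k,\ell\}$), applying the Jacobi identity to $C_{k\ell},C_{ij},C_{jk\ell}$ reduces $\big[C_{k\ell},[C_{ij},C_{jk\ell}]\big]$ to $\big[C_{ij},[C_{k\ell},C_{jk\ell}]\big]=0$. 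Thus the expression vanishes, completing the proof; the crux is choosing the right grouping so that every surviving bracket is covered by the subset/disjointness hypotheses of \eqref{central2}.
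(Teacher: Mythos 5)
Your proof is correct, and it reaches the conclusion by a genuinely shorter path than the paper, though the closing mechanism is the same in both: if $C_{k\ell}$ commutes with two elements, it commutes with their commutator (Jacobi, valid since the algebra is associative). The difference lies in how the sum $D_{ijk}+D_{ij\ell}$ gets realized as such a commutator. The paper starts from the pair $\big(C_{ik\ell},C_{jk\ell}\big)$, each of which commutes with $C_{k\ell}$ by \eqref{central2}, expands $[C_{ik\ell},C_{jk\ell}]$ via \eqref{R4decom} into six two-index commutators, i.e.\ six $D$'s, and then uses cyclic invariance and antisymmetry to collapse them to $-2D_{ijk}-2D_{ij\ell}$. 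You instead run the decomposition in the opposite direction: from $2(D_{ijk}+D_{ij\ell})=\big[C_{ij},\,C_{jk}+C_{j\ell}\big]$ and $C_{jk}+C_{j\ell}=C_{jk\ell}-C_{k\ell}+C_j+C_k+C_\ell$, every correction term dies inside $[C_{ij},\,\cdot\,]$ by \eqref{central2} (singletons are subsets of or disjoint from $\{i,j\}$, and $\{k,\ell\}$ is disjoint from $\{i,j\}$), giving $2(D_{ijk}+D_{ij\ell})=[C_{ij},C_{jk\ell}]$; Jacobi with the commuting pair $\big(C_{ij},C_{jk\ell}\big)$ then finishes. Your route expands only one decomposition and needs no sign bookkeeping among six $D$-terms, so it is more economical; the paper's computation yields as a by-product the identity $[C_{ik\ell},C_{jk\ell}]=-2(D_{ijk}+D_{ij\ell})$, a relation between three-index generators of independent interest. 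Two small remarks: your middle paragraph meanders before you settle on the correct grouping (the aborted attempt to apply \eqref{central2} to $[C_{ij},C_{jk\ell}]$ should simply be deleted), and your opening formulas $D_{ijk}=\tfrac12[C_{ij},C_{jk}]$, $D_{ij\ell}=\tfrac12[C_{ij},C_{j\ell}]$ for indices in arbitrary relative order tacitly rely on the cyclic invariance of Lemma~\ref{permutationsymofDs}, which you should cite at that step.
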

\begin{proof}
 Note that $C_{k\ell}$ and hence $P_{k\ell}$ will commute with both $C_{ik\ell}$ and $C_{jk\ell}$. We use the decomposition relation \eqref{R4decom} to obtain
\begin{align*}
[C_{ik\ell}, C_{jk\ell}]&= [C_{ik}+C_{k\ell}+C_{i\ell}, C_{jk} +C_{k\ell}+C_{j\ell}]\\
&= [C_{ik},C_{kj}]+[C_{ik}, C_{k\ell}] + [C_{\ell k}, C_{kj}] + [C_{k\ell}, C_{\ell j}] + [C_{i\ell}, C_{\ell k} ]+[C_{i \ell}, C_{\ell j}]\\
&= 2 ( D_{ikj} + D_{ik\ell} +D_{\ell k j} + D_{k\ell j} +D_{i\ell k} +D_{i\ell j} ).
\end{align*}
Next we use the fact that $D_{ijk}$ is invariant under cyclic permutations and odd under singleton flips $D_{ijk}=-D_{ikj}$. This leaves
$
[C_{ik\ell}, C_{jk\ell}]=-2D_{ijk} -2D_{ij\ell}$.
So, finally we see that Lemma~\ref{PDcomm} holds from the commutativity of $C_{k\ell}$ with $C_{ik\ell}$ and $C_{jk\ell}$.
\end{proof}

Of course, this lemma does not give us the required relation $[C_{k\ell}, D_{ijk}]$ rather relations between such terms. For example in $R(4)$, we are looking to obtain 24 equation with leading order terms $[P_{A}, D_{B}]$ with $A$ a two element subset and $B$ a three element subset. We have 12 ``inner'' equations of the form \eqref{intquad}. There are 6 equations coming from Lemma \ref{PDcomm} plus 4 from the fact that $C_{ijk\ell}$ is central and 4 from the fact that $C_{ijk}$ commutes with $D_{ijk}$. Finally, there are 12 additional choices of $I$, $J$, $K$ where at least one subset contains 3 elements. This leads to consistent linear system that can be solved for the required relations.

\begin{Lemma}\label{outerlemma}
In the Racah algebra $R(n)$, the elements $P_{ij}$ and $D_{jk\ell}$ satisfy the following relation:
\begin{gather}
[P_{ij}, D_{jk\ell}] =P_{i\ell}P_{jk}-P_{j\ell}P_{ik} .\label{outter}
\end{gather}
\end{Lemma}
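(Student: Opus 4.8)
The plan is to first reduce the statement to a commutator of the original $C$ generators, and then to generate that commutator from the defining quadratic relation by allowing composite (two-element) indexing blocks. Since the singletons $C_i$ are central by \eqref{central2}, the definition \eqref{P's} gives $[P_{ij}, D_{jk\ell}] = [C_{ij}, D_{jk\ell}]$, and the right-hand side of \eqref{outter} depends only on the $P$'s, so it suffices to evaluate $[C_{ij}, D_{jk\ell}]$ and to rewrite the answer through \eqref{P's}. The key new input is that the decomposition relation \eqref{R4decom} together with the disjointness rule \eqref{central2} splits a composite $D$ into interior pieces: expanding $C_{jk\ell}$ and discarding the commuting and central terms gives $\tfrac12[C_{ij}, C_{jk\ell}] = D_{ijk} + D_{ij\ell}$. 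Hence the defining relation \eqref{quad4}, applied to the disjoint partition $I = \{i\}$, $J = \{j\}$, $K = \{k,\ell\}$, produces an identity whose left-hand side is $[C_{jk\ell}, D_{ijk} + D_{ij\ell}]$ and whose right-hand side is a polynomial in $C_{ik\ell}$, $C_{jk\ell}$, $C_{ij}$ and central elements.

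First I would expand the left-hand side by substituting $C_{jk\ell} = C_{jk} + C_{k\ell} + C_{j\ell} - C_j - C_k - C_\ell$ from \eqref{R4decom}. The resulting commutators split into two kinds: interior ones such as $[C_{jk}, D_{ijk}]$ and $[C_{j\ell}, D_{ij\ell}]$, whose values are already fixed by \eqref{intquad}; and genuine ``cross'' commutators in four distinct indices, of which $[C_{k\ell}, D_{ijk}]$ is, up to sign, the target. To control the cross commutators I would use Lemma \ref{PDcomm}, which pairs them up, together with the Jacobi identity: writing $D_{jk\ell} = \tfrac12[C_{jk}, C_{k\ell}]$ and using $[C_{ij}, C_{k\ell}] = 0$ from \eqref{central2} yields $[C_{ij}, D_{jk\ell}] = -[C_{k\ell}, D_{ijk}]$. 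Supplemented by the closure facts that $C_{ijk\ell}$ is central and that $C_{ijk}$ commutes with $D_{ijk}$, these relations express every cross commutator appearing on the left in terms of a single unknown.

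To finish, I would reduce every product of three- and four-index generators on the right-hand side back to quadratics in the two-index generators using \eqref{R4decom}, and then rewrite everything through \eqref{P's}. The central and higher-index contributions are arranged to cancel, and solving the consistent (though overdetermined) linear system described before the lemma isolates $[C_{ij}, D_{jk\ell}] = P_{i\ell}P_{jk} - P_{j\ell}P_{ik}$. Because the whole defining system \eqref{central2}--\eqref{quad4} is invariant under the symmetric group acting on the indices, it is enough to carry this out for one configuration: every other instance of \eqref{outter} is the image of this one under a relabeling automorphism, and one checks directly that the expression $P_{i\ell}P_{jk} - P_{j\ell}P_{ik}$ is covariant under such relabelings and respects both the cyclic invariance and the flip-antisymmetry of $D_{jk\ell}$.

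The main obstacle I expect is bookkeeping and consistency rather than any single hard step: each composite partition introduces fresh cross commutators, so one must verify that the relations coming from \eqref{quad4}, Lemma \ref{PDcomm} and the Jacobi symmetry together close the finite system of such commutators, and that this system admits a unique solution. A secondary subtlety is keeping the orientation conventions for $D$ straight both when reducing composite $D$'s and when invoking the permutation symmetry, since a single transposition of indices flips the sign of $D_{jk\ell}$.
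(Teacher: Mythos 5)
Your proposal is correct and follows essentially the same route as the paper: both reduce $[P_{ij},D_{jk\ell}]$ to $[C_{ij},D_{jk\ell}]$ and then extract it from the linear system generated by the multi-index instances of \eqref{quad4} (in particular $I=\{i\}$, $J=\{j\}$, $K=\{k,\ell\}$), the inner relations \eqref{intquad}, Lemma \ref{PDcomm}, and the centrality of $C_{ijk}$ and $C_{ijk\ell}$, solving the resulting overdetermined but consistent system. Your Jacobi-identity pairing $[C_{ij},D_{jk\ell}]=-[C_{k\ell},D_{ijk}]$ is a valid and convenient way to organize by hand the bookkeeping that the paper delegates to \textsc{Maple}, but the underlying argument is the same.
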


\begin{proof} As described above, we used a symbolic manipulator (\textsc{Maple} \cite{Maple}) to solve the linear system of equations involving terms $P_{A} D_B$ for the required expressions. The resulting identities are non-trivial but can be checked as follows. On the one hand, it can be directly verified using the commutation \eqref{central2} and decomposition relations \eqref{R4decom} that $[C_{ij}, D_{ijk}]$ can be expressed as
\begin{gather*}
[C_{ij},D_{jkl}]= - 4[C_{ijk\ell}-C_{ik\ell}, D_{ik\ell}]+2[C_{ijk\ell}-C_{jk\ell}, D_{jk\ell}]
+2[C_{ijk\ell}-C_{ijk}, D_{ijk}] \\
\hphantom{[C_{ij},D_{jkl}]=}{}- 2[C_{ijk\ell}-C_{ij\ell}, D_{ij\ell}]+4[C_{ij}, D_{jk\ell}+D_{ik\ell}]- 2[C_{ik},D_{jkl}-D_{ij\ell}]
\\
\hphantom{[C_{ij},D_{jkl}]=}{}- 2[C_{i\ell}, D_{jk\ell}+D_{ijk}]
-4 [C_{k\ell}, D_{ik\ell}]-2 [C_{j\ell}, D_{ij\ell}]
+2[C_{jk}, D_{ijk}]\\
\hphantom{[C_{ij},D_{jkl}]=}{}+4[C_{jk\ell}, D_{ik\ell}-D_{ijk}]+2[C_{jk\ell}, D_{ijk}+D_{ij\ell}].
\end{gather*}
 On the other hand, the first 2 lines are equal to 0 as consequences of the commutation relations~\eqref{central2}, as $C_{ijkl}$ and $C_{ijk}$ will both commute with $D_{ijk}$ (and permutations). The third line also vanishes from the relation in Lemma \ref{PDcomm}. Finally, the second to last line is comprised of terms containing ``inner'' quadratic relations for singleton subsets \eqref{intquad} and the last line ``inner'' quadratic relations for multi-index sets $I=\{i\}$, $J=\{j\}$, $K=\{k\ell\}$ and $I=\{i\}$, $J=\{k\}$, $ K=\{j\ell\}$ respectively in \eqref{quad4}. Replacing these last two lines with their quadratic identities leaves the resulting equation
 \[ [C_{ij}, D_{jk\ell}] =(C_{i\ell}- C_i -C_j) (C_{jk}-C_j-C_k)-(C_{j\ell}-C_j-C_{\ell})(C_{ik}-C_i-C_k),\]
which can be more succinctly expressed using the $P$ generators as in \eqref{outter}.
\end{proof}

\begin{Lemma} \label{allPDrel|}
The outer commutators satisfy the following identities:
\begin{gather*}
 [P_{ij}, D_{jk\ell}] =- [P_{ji} ,D_{ik\ell}],
\qquad
[P_{ij}, D_{jk\ell}]= [P_{k\ell}, D_{\ell ij}]
\end{gather*}
and
\begin{equation}
[P_{ij},D_{jk\ell}]+[P_{kj}, D_{j\ell i}]+[P_{\ell j}, D_{j ik}]=0. \label{switchPD}
\end{equation}
\end{Lemma}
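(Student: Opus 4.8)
The plan is to prove all three identities by direct substitution of the closed form \eqref{outter} from Lemma~\ref{outerlemma}, namely $[P_{ij}, D_{jk\ell}] = P_{i\ell}P_{jk} - P_{j\ell}P_{ik}$, and then to simplify using two structural facts: the $P$-generators are symmetric in their indices, $P_{ab}=P_{ba}$ (immediate from \eqref{P's} and the symmetry of the $C$'s), and any two $P$-generators with disjoint index sets commute (a special case of \eqref{central2}). In each of the three relations the shared index already sits in the first slot of the relevant $D$-generator, so \eqref{outter} applies verbatim after relabelling; no cyclic rotation of $D$ is needed before substituting.

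For the first identity, applying \eqref{outter} to the right-hand side with the relabelling $(i,j,k,\ell)\mapsto(j,i,k,\ell)$ gives $[P_{ji},D_{ik\ell}] = P_{j\ell}P_{ik}-P_{i\ell}P_{jk}$, whose negative is exactly $P_{i\ell}P_{jk}-P_{j\ell}P_{ik}=[P_{ij},D_{jk\ell}]$. For the second identity, applying \eqref{outter} with $(i,j,k,\ell)\mapsto(k,\ell,i,j)$ yields $[P_{k\ell},D_{\ell ij}] = P_{kj}P_{\ell i}-P_{\ell j}P_{ki}$, which after using $P_{ab}=P_{ba}$ becomes $P_{jk}P_{i\ell}-P_{j\ell}P_{ik}$, again matching $[P_{ij},D_{jk\ell}]$. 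Both cases are thus a one-line comparison once the index bookkeeping is done.

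For the third identity I would expand all three commutators with \eqref{outter}, obtaining $P_{i\ell}P_{jk}-P_{j\ell}P_{ik}$, then $P_{ki}P_{j\ell}-P_{ji}P_{k\ell}$, and then $P_{\ell k}P_{ji}-P_{jk}P_{\ell i}$. Rewriting every factor with $P_{ab}=P_{ba}$ so that the six monomials are built from the three pairs $\{P_{i\ell},P_{jk}\}$, $\{P_{ik},P_{j\ell}\}$ and $\{P_{ij},P_{k\ell}\}$, one sees that each pair occurs exactly twice, once in each ordering and with opposite signs. The genuinely load-bearing observation — and the only real subtlety in the whole lemma — is that the two index sets in each such pair are disjoint, so the factors commute by \eqref{central2}; hence the three pairs cancel in succession and the sum vanishes. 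All that remains is routine relabelling, so I expect no substantive obstacle beyond keeping the index assignments straight.
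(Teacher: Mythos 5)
Your proof is correct and follows essentially the same route as the paper, which simply declares all three identities to be ``direct consequences of Lemma~\ref{outerlemma}''; you have merely written out the substitutions, the symmetry $P_{ab}=P_{ba}$, and the commutativity of disjoint-index $P$'s that the paper leaves implicit. The only detail the paper adds that you omit is the remark that the first identity is Lemma~\ref{PDcomm} rewritten in terms of the $P$'s, but this is an observation, not a gap.
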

\begin{proof}
 Each of these are direct consequences of Lemma \ref{outerlemma}. We also note that the first equation is exactly Lemma \ref{PDcomm} expressed in terms of $P$.
 \end{proof}

For the final relations, we would like to understand the commutation of the $D$'s with each other. These can be determined from the definition of the $D$'s plus the assumption that we have an associative algebra and so the Jacobi relations hold. In particular, we have the following identity.

\begin{Lemma} \label{DDlemma}
In the Racah algebra $R(n)$, the following relation holds:
\begin{equation}
[D_{ijk}, D_{jk\ell}] = P_{jk}\left(D_{ki\ell} +D_{i\ell j}\right).\label{DD}
\end{equation}
\end{Lemma}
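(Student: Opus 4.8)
The plan is to obtain \eqref{DD} from the Jacobi identity together with the relations already established, exactly as suggested by the remark that the $D$--$D$ commutators follow from the definition of the $D$'s and associativity. First I would unfold one of the two generators using its definition, say $2D_{ijk}=[C_{ij},C_{jk}]$, so that $[D_{ijk},D_{jk\ell}]=\tfrac12\big[[C_{ij},C_{jk}],D_{jk\ell}\big]$, and then apply the derivation property of the adjoint action (the Jacobi identity) to write this as $\tfrac12\big([C_{ij},[C_{jk},D_{jk\ell}]]-[C_{jk},[C_{ij},D_{jk\ell}]]\big)$. The advantage of this arrangement is that the two inner commutators are already known: $[C_{jk},D_{jk\ell}]$ is an \emph{interior} quadratic relation, namely \eqref{intquad} restricted to the subset $\{j,k,\ell\}$ (after the cyclic relabeling $D_{jk\ell}=D_{\ell jk}$), while $[C_{ij},D_{jk\ell}]$ is precisely the \emph{outer} relation \eqref{outter} of Lemma~\ref{outerlemma}. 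Both reduce to quadratic expressions in the $P$ generators, and each of the acting generators $C_{ij},C_{jk}$ differs from $P_{ij},P_{jk}$ by central singletons only, so one may freely pass between them inside the commutators.

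Next I would expand the two remaining outer commutators $[C_{ij},\,\cdot\,]$ and $[C_{jk},\,\cdot\,]$ acting on these quadratic-in-$P$ expressions by the Leibniz rule, reducing each to a sum of products of a single $D$ with a single $P$. The only base commutators needed are $[C_{ab},C_{cd}]=2D_{\dots}$, read off from the definition of the $D$'s and Lemma~\ref{permutationsymofDs}, together with the vanishing commutators coming from disjointness or containment in \eqref{central2} and the centrality of the singletons $C_i$. This step is purely mechanical and produces a combination of terms of the form $D_{\,\cdot}P_{\,\cdot}$ and $P_{\,\cdot}D_{\,\cdot}$.

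Finally I would bring all the resulting terms into a common order and collect. Here two simplifications do the work: the cyclic invariance and the sign flip $D_{abc}=-D_{acb}$ of the $D$'s let one identify and cancel matching terms, for instance $P_{\ell j}D_{jki}$ against $P_{\ell j}D_{ijk}$ via $D_{jki}=D_{ijk}$, while the reordering of any mismatched product $D_{\,\cdot}P_{\,\cdot}$ into $P_{\,\cdot}D_{\,\cdot}$ reintroduces an outer relation \eqref{outter}, whose purely quadratic-in-$P$ remainder must cancel against the analogous remainders from the other terms. Each such cancellation is a commutator of two \emph{disjoint} $P$'s, hence zero by \eqref{central2}. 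After these cancellations the surviving terms should assemble into $P_{jk}(D_{ki\ell}+D_{i\ell j})$.

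I expect the collection in the last paragraph to be the main obstacle: the Leibniz expansion generates on the order of a dozen mixed products, and verifying that every quadratic-in-$P$ contribution and every spurious $D$-term cancels requires careful bookkeeping of signs under the permutation symmetries of the $D$'s recorded in Lemma~\ref{allPDrel|}. A useful consistency check, and a way to cut the bookkeeping roughly in half, is to carry out the complementary expansion with $2D_{jk\ell}=[C_{jk},C_{k\ell}]$ unfolded instead; the two computations must agree, providing a check, and averaging them yields a more symmetric intermediate expression. Given the number of terms, this is also the natural place to confirm the identity with the same symbolic computation used for Lemma~\ref{outerlemma}.
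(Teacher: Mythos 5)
Your overall skeleton (unfold one $D$ via its definition, apply the Jacobi identity, substitute the known quadratic relations, expand by Leibniz and collect) is the same as the paper's, but your specific choice of which factor to unfold creates a genuine gap at the collection step. Unfolding $2D_{ijk}=[C_{ij},C_{jk}]$ forces the \emph{inner} relation $[C_{jk},D_{jk\ell}]$ into the computation, and its singleton correction terms ($-2P_jP_{k\ell}+2P_{\ell j}P_k$ after the relabeling $D_{jk\ell}=D_{\ell jk}$) do not disappear. Carrying your expansion out explicitly with the tools you cite yields
\begin{gather*}
[D_{ijk},D_{jk\ell}]=D_{ijk}P_{k\ell}-D_{ij\ell}P_{jk}+2D_{ij\ell}P_{k}+D_{kj\ell}P_{ik},
\end{gather*}
and the difference between this and the target $\left(D_{ji\ell}+D_{ki\ell}\right)P_{jk}$ is exactly
\begin{gather*}
2D_{ij\ell}P_k+D_{kj\ell}P_{ik}+D_{k\ell i}P_{jk}+D_{kij}P_{\ell k}=0,
\end{gather*}
a relabeled (and reordered) copy of the four-term identity \eqref{PDi}, i.e., \eqref{PDT}. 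This residue contains the central singleton $P_k$ and terms with no factor $P_{jk}$ at all, so it cannot be removed by the mechanisms you invoke --- cyclic/sign symmetries of the $D$'s, reordering via the outer relation \eqref{outter}, and vanishing commutators of disjoint $P$'s \eqref{central2}. Nor can you simply quote \eqref{PDi}: in this paper it is derived \emph{after} Lemma \ref{DDlemma} and \emph{from} the relation \eqref{DD} you are trying to prove, so invoking it here is circular. Your fallback of expanding $2D_{jk\ell}=[C_{jk},C_{k\ell}]$ instead has the same defect, since $[C_{jk},D_{ijk}]$ is again an inner relation, and averaging the two expansions does not cancel the residues.

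The missing idea is the paper's choice of cyclic representative: by Lemma \ref{permutationsymofDs} one may write $D_{jk\ell}=\tfrac12[P_{k\ell},P_{\ell j}]$, the form that \emph{avoids} the shared pair $P_{jk}$. The Jacobi identity then produces only $[P_{k\ell},D_{ijk}]$ and $[P_{\ell j},D_{ijk}]$, each of which shares exactly one index with $\{i,j,k\}$ and is therefore an outer relation \eqref{outter}: a pure quadratic in the $P$'s with no singleton terms. The subsequent Leibniz expansion collapses in two lines using only disjointness and the antisymmetry $D_{\ell jk}+D_{\ell kj}=0$, giving $-P_{jk}\left(D_{j\ell i}+D_{k\ell i}\right)=P_{jk}\left(D_{ji\ell}+D_{ki\ell}\right)$, and Lemma \ref{PDcomm} converts between the two orderings \eqref{DD} and \eqref{DDop}. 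In short: which cyclic form of the $D$ you unfold matters crucially here --- with your choice the lemma becomes equivalent to the not-yet-established relation \eqref{PDT}, while with the paper's choice it follows from the outer relations alone.
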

\begin{proof}
 We use Lemma \ref{permutationsymofDs} to write
$ [D_{ijk}, D_{jk\ell}] =\frac12 [D_{ijk}, [ P_{k\ell}, P_{\ell j}]]$
using the fact that our algebra is associative and so the Jacobi identity holds gives
\begin{align*}
 [D_{ijk}, D_{jk\ell}] &= \tfrac12[ P_{\ell j}, [P_{k\ell}, D_{ijk}]]- \tfrac12[P_{k\ell}, [ P_{\ell j}, D_{ijk}]] \\
&= \tfrac12 [ P_{\ell j}, [P_{\ell k }, D_{kij}]]-\tfrac12 [P_{k\ell}, [ P_{\ell j}, D_{jki}]] \\
&= \tfrac12[ P_{\ell j}, P_{\ell j} P_{ki} -P_{kj}P_{\ell i}] - \tfrac12 [P_{k\ell}, P_{\ell i}P_{jk}- P_{j i}P_{\ell k}]\\
&= -\tfrac12[ P_{\ell j}, P_{jk}P_{\ell i}] - \frac12 [P_{k\ell}, P_{jk}P_{\ell i}] \\
&= -D_{\ell j k}P_{\ell i}- P_{jk} D_{j \ell i} - D_{\ell k j} P_{\ell i} -P_{jk} D_{k \ell i}
= -P_{jk} ( D_{j \ell i} +D_{k\ell i}).
\end{align*}
We can remove the negative signs using the skew symmetry of the $D$'s to write the expression~as
\begin{gather*}
[D_{ijk}, D_{jk\ell}] = P_{jk}\left( D_{ji\ell} +D_{k i\ell }\right).
\tag*{\qed}
\end{gather*} \renewcommand{\qed}{}
\end{proof}

Note that we can also apply Lemma \ref{PDcomm} to reverse the order of the right-hand side, giving the alternate form
\begin{equation}
[D_{ijk}, D_{jk\ell}] = \left(D_{ki\ell} +D_{ji\ell }\right)P_{jk}. \label{DDop}
\end{equation}

Of course, there are two other expressions that we could have used for the definition of the~$D_{jk\ell}$. The fact that each of these is equal is equivalent to verifying the Jacobi identity triples with triples of the form $[D, [P, P]]$. We will return to problem in more generality but for now we consider the following:
\[
[ D_{ijk}, [P_{jk}, P_{kl}]+ [ P_{jk}, [ P_{kl}, D_{ijk}]]+[P_{kl}, [ D_{ijk}, P_{jk}]]=0 .
\]
The first terms can be evaluated using the symmetry of the $D$'s and expression \eqref{DD}, to give
\[
[ D_{ijk}, [P_{jk}, P_{kl}]]= 2P_{jk}( D_{ji\ell}+D_{ki\ell}).
\]
The second term can be computed using \eqref{outter} as
\[ [ P_{jk}, [ P_{\ell k}, D_{kij}]] = [P_{jk}, P_{\ell j}P_{ki} - P_{kj}P_{\ell i}] = 2P_{\ell j} D_{jki}+2D_{kj\ell} P_{ik}.\]
The third term uses \eqref{intquad} and becomes
\begin{align*}
-[P_{k\ell}, [ P_{jk}, D_{ijk}]] &= -[P_{k\ell},( P_{jk}-2P_j)P_{ik} - P_{ij}(P_{jk}-2P_{k})]\\
&= -2D_{\ell kj}P_{ik} -2(P_{jk} +2P_j)D_{\ell k i}+2P_{ij}D_{\ell k j}.
\end{align*}
Putting it all together (each line is one term) and ordering $i< j< k <\ell$ gives
\begin{gather*}
0 =  -2P_{jk}(D_{ ij \ell} +D_{ i k\ell})+2P_{\ell j} D_{jki}-2D_{jk\ell} P_{ik}
+2D_{jk\ell}P_{ik}\\
\hphantom{0=} +2(P_{jk} -2P_j)D_{ik \ell}-2P_{ij}D_{jk\ell} .
\end{gather*}
 or, with cancellations,
\[ 0 =-2P_{jk}D_{ ij \ell}+2P_{\ell j} D_{jki}-4P_jD_{ik \ell}-2P_{ij}D_{jk\ell} .\]
Using the symmetry of the indices, and canceling out a factor of $-2$ gives
\[
 0 =P_{ij}D_{jk\ell}+2P_jD_{i k \ell} +P_{kj}D_{ j \ell i}+P_{\ell j} D_{jik}.
\]
We can use the permutation symmetry in the indices to express this instead as a sum over the~$i$ index,
\begin{equation}
 0 =2P_{i}D_{jk\ell}+P_{ji}D_{i k \ell} +P_{ki}D_{ i \ell j}+P_{\ell i} D_{ijk}\label{PDi}.
\end{equation}
Finally, we note that we can use \eqref{switchPD} from Lemma \ref{allPDrel|} to express the relation with the order switched,
\[
0 =2D_{jk\ell}P_{i}+D_{i k \ell}P_{ji} +D_{ i \ell j}P_{ki}+ D_{ijk}P_{\ell i}.
\]

Thus we see that this identity can be understood as a consequence of the Jacobi identity for the commutations relations for the generators, as was first shown in~\cite{CGPAV22}. We record the previous results in the following theorem and show that the remaining Jacobi relations are satisfied without additional constraints in the rank-2 case, $R(4)$. In particular, we mention that Theorem~\ref{bigthm} shows that our definition of the Racah algebra is equivalent to the definition given in \cite{CGPAV22}.

\begin{Theorem}
\label{bigthm} The Racah algebra $R(4)$ is generated by the elements $P_i$, $P_{ij}$ \eqref{P's} and $D_{ijk}$ where the $P$'s are symmetric under interchange of the indices and the $D$'s are anti-symmetric. The generators satisfy the following commutation relations:
\begin{gather}
[P_{ij}, P_{jk}]=2 D_{ijk}, \label{Ddef}\\
[P_{jk},D_{ijk}]= \left(P_{jk}-2P_j\right)P_{ki}-P_{ij}\left(P_{jk}-2P_{k}\right),\label{innerT}\\
[P_{ij}, D_{jk\ell}] =P_{i\ell}P_{jk}-P_{j\ell}P_{ik},\label{outerT}\\
 [D_{ijk}, D_{jk\ell}] = P_{jk}\left(D_{ji\ell} +D_{i\ell k}\right), \label{DDT}\\
P_{i\ell}D_{\ell jk}+ P_{j\ell} D_{\ell ki}+P_{k\ell}D_{\ell ij}+2P_{\ell}D_{ijk}=0.\label{PDT}
\end{gather}
Furthermore, these relations satisfy the Jacobi identity without additional constraints.
\end{Theorem}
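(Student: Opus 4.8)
The plan is to establish the theorem in two parts: that the displayed commutation relations hold, and that the Jacobi identity closes on them with no further relations. The first part only requires collecting results already in hand. Relation \eqref{Ddef} follows from $D_{ijk}=\tfrac12[C_{ij},C_{jk}]$ together with the centrality of the singletons $P_i=-C_i$, so that $[P_{ij},P_{jk}]=[C_{ij},C_{jk}]=2D_{ijk}$. Relation \eqref{innerT} is a rewriting of the interior quadratic relation \eqref{intquad}; relation \eqref{outerT} is Lemma \ref{outerlemma}; relation \eqref{DDT} is Lemma \ref{DDlemma}, its right-hand side matching \eqref{DD} after using the cyclic symmetry of the $D$'s; and relation \eqref{PDT} is the syzygy \eqref{PDi} derived above, rewritten as a sum over a single index. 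This settles the generation-and-relations claim.

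For the Jacobi identity I would first note that, since every $P_i$ is central, any triple containing a singleton vanishes term by term, so it suffices to check triples drawn from $\{P_{ij},D_{ijk}\}$. I would organize these by type---$\{P,P,P\}$, $\{P,P,D\}$, $\{P,D,D\}$, $\{D,D,D\}$---and within each type by the pattern of index overlaps in $\{1,2,3,4\}$. Two types are essentially disposed of already. Each $\{P,P,P\}$ triple, upon expanding the inner bracket by \eqref{Ddef}, reduces to $\{P,D\}$ commutators governed by \eqref{innerT} and \eqref{outerT}, after which the three cyclic contributions cancel using \eqref{central2} and the symmetries of the $D$'s from Lemma \ref{permutationsymofDs}, exactly as in the rank-one computation of Lemma \ref{JacobiRank1}. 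The $\{P,P,D\}$ triples are precisely the brackets $[D_{ijk},[P_{jk},P_{k\ell}]]$ and their relatives whose evaluation produced \eqref{PDi}; they therefore close on the already-established relation \eqref{PDT} rather than forcing anything new.

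The substance of the argument lies in the $\{P,D,D\}$ and $\{D,D,D\}$ triples. For these I would expand each inner commutator by \eqref{DDT} and each resulting outer commutator by the Leibniz rule, reducing the resulting pieces through \eqref{Ddef}, \eqref{innerT}, \eqref{outerT}, and \eqref{DDT}, and then repeatedly apply the antisymmetry and cyclic symmetry of the $D$'s together with the identities of Lemma \ref{allPDrel|} to collect like monomials; the residual terms should reassemble into multiples of the syzygy \eqref{PDT} (and its order-reversed counterpart), hence vanish, with the alternate form \eqref{DDop} available for reordering $P\!\cdot\!D$ products. I expect the main obstacle to be the $\{D,D,D\}$ triples: because each $DD$-bracket \eqref{DDT} already carries a prefactor $P_{jk}$ and a two-term $D$-sum, a single Jacobi triple unfolds into a large collection of $P\!\cdot\!D$ monomials whose cancellation is delicate and demands careful, configuration-by-configuration index bookkeeping. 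What keeps this finite and tractable is that $R(4)$ has only six distinct $P_{ij}$ and four distinct $D_{ijk}$, so once the discrete symmetries identify equivalent index patterns, only a short list of genuinely distinct cases remains.
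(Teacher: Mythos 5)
Your proposal follows essentially the same route as the paper: the relations are collected from the already-established results (the definition of the $D$'s, the interior relation \eqref{intquad}, Lemmas \ref{outerlemma} and \ref{DDlemma}, and the syzygy \eqref{PDi}), and the Jacobi identity is then verified case by case---organized by how many $D$'s appear---with each case reduced via \eqref{DDT}/\eqref{DDop}, Lemma \ref{allPDrel|}, and reassembly of the residue into \eqref{PDT} and its order-reversed form \eqref{switchPD}, which is exactly the strategy of the paper's proof and its Appendix \ref{AppendixA}. The only minor discrepancy is your prediction of where the difficulty lies: in the paper the heaviest computation is the mixed case $\{P_{ij},D_{ijk},D_{jk\ell}\}$, while the three-$D$ case turns out to be comparatively short.
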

\begin{proof}
 From the definition of the $P_{ij}$ and $P_{i}$ \eqref{P's} and using the algebra decomposition relations \eqref{R4decom}, it is clear that the $P$'s generate the algebra and are symmetric under interchange of indices. Lemma \ref{permutationsymofDs} shows that the $D_{ijk}$ are antisymmetric under interchange of indices.

Commutation relation \eqref{Ddef} is the definition of the $D$'s while relation \eqref{innerT} follows directly from the change of basis applied to \eqref{inner}. Lemma \ref{outerlemma} gives relation \eqref{outerT} and Lemma~\ref{DDlemma} gives~\eqref{DDT}. Finally, the identity \eqref{PDT} follows from the discussion above under a cyclic permutation, or equivalently using the Jacobi identity with $[D_{jk\ell}, [P_{ik}, P_{k\ell}]]$.

It remains to consider Jacobi identities more generally. First, we begin with the case of three~$P$'s. The Jacobi identity for triples $\{P_{jk},P_{jk},P_{ij}\} $ and $\{ P_{ki}, P_{ij}, P_{jk}\}$ are verified using Lemma \ref{JacobiC}. The case $\{P_{jk},P_{jk},P_{i\ell}\}$ is trivial since everything commutes. The final case is a~triple of the form $\{P_{i\ell},P_{jk},P_{ij}\}$ as in
\begin{equation*}
[ P_{i\ell},[P_{jk},P_{ij}]] + [P_{jk}, [ P_{ij}, P_{i\ell}]] + [P_{ij}, [P_{i\ell}, P_{jk}]]= 2[P_{i\ell}, D_{kji}]+2[P_{jk}, D_{ji\ell}].
\end{equation*}
Here we have use the definition of the $D$'s and the fact that $P$'s with disjoint indices commute. Lemma \ref{allPDrel|} gives
$ [P_{\ell i }, D_{i kj}]= [P_{kj}, D_{j \ell i}]$
and so the remaining terms cancel.

For triples containing two $P$'s and a $D$, a first choice is for all three to belong to an $R(3)$ sub-algebra. This case was addressed in the previous section, Lemma~\ref{JacobiRank1}.

For the other two cases, we assume that the $P$'s are not identical and have all 4 indices contained in them. Without loss of generality, we take $D_{ijk}$ and then either $P_{i\ell}$ and $P_{k \ell}$ or $P_{i \ell}$ and $P_{jk}$. The first case was discussed above where the identity \eqref{PDT} was derived.

For the second case, we compute
\begin{gather*}
  [ P_{i\ell}, [ P_{jk} , D_{ijk}]] + [P_{jk}, [ D_{ijk}, P_{i\ell}]]+ [ D_{ijk}, [P_{i\ell}, P_{jk}]]
 = [ P_{i\ell}, [ P_{jk} , D_{ijk}]] - [ P_{jk}, [ P_{\ell i} , D_{ijk}]].
\end{gather*}
The first term is computed using \eqref{innerT}
\begin{align*}
[ P_{i\ell}, [ P_{jk} , D_{ijk}]]& = [ P_{i\ell},  (P_{jk}-2P_j )P_{ik} - P_{ij} ( P_{jk}-2P_k )] \\
& = 2 (P_{jk}-2P_j )D_{\ell i k} - 2 D_{\ell i j} ( P_{jk}-2P_k ).
\end{align*}
Using \eqref{outerT}, the second term becomes
\begin{align*}
- [ P_{jk}, [ P_{\ell i} , D_{ijk}]] &= - [P_{jk}, P_{\ell k} P_{ij} - P_{ik} P_{\ell j}] \\
&= - 2 D_{jk\ell}P_{ij} -2 P_{\ell k} D_{kji}+2P_{ik}D_{kj\ell}+2 D_{jki} P_{\ell j}.
\end{align*}
Thus, we have
\begin{gather*}
  [ P_{i\ell}, [ P_{jk} , D_{ijk}]] + [P_{jk}, [ D_{ijk}, P_{i\ell}]]+ [ D_{ijk}, [P_{i\ell}, P_{jk}]]\\
\qquad= 2P_{ik}D_{kj\ell} +2P_{jk}D_{k \ell i} +4P_kD_{ij \ell} + 2P_{\ell k} D_{k ij} -2 D_{jk \ell}P_{ij}\\
\phantom{\qquad=}{}- 4 D_{k \ell i}P_j -2D_{j\ell i } P_{ kj}-2D_{j i k}P_{\ell j}.
\end{gather*}
These resulting two lines are equivalent to \eqref{PDT} and the equation in reverse order \eqref{switchPD} and hence add to 0.

Thus, we have shown that the Jacobi relations for two $P$'s and a $D$ are satisfied without additional constraints.
The proof of higher-order relations follow similarly and are included in Appendix \ref{AppendixA}.
\end{proof}

Let us finish the section by collecting the results above that hold in $R(n)$ and compute the last commutation relation from the definition in~\cite{CGPAV22}, thus showing that our two definitions agree. We conjecture that in this case as well the Jacobi identity is satisfied without additional constraints, though we leave this for future investigation.

\begin{Theorem}
\label{bigthmn2} The Racah algebra $R(n)$ is generated by the elements $P_i$, $P_{ij}$ \eqref{P's} and $D_{ijk}$ where the $P$'s are symmetric under interchange of the indices and the $D$'s are anti-symmetric. The generators satisfy the following commutation relations $($for distinct indices$)$:
\begin{gather}
[P_{ij}, P_{jk}]=2 D_{ijk}, \label{Ddefn}\\
[P_{jk},D_{ijk}]=  (P_{jk}-2P_j )P_{ki}-P_{ij} (P_{jk}-2P_{k} ),\\
[P_{ij}, D_{jk\ell}] =P_{i\ell}P_{jk}-P_{j\ell}P_{ik},\\
[D_{ijk}, D_{jk\ell}] = P_{jk} (D_{ji\ell} +D_{i\ell k} ),\label{DDTn}\\
[D_{ijk}, D_{k\ell m }] = P_{jk}D_{\ell mi} - P_{ki}D_{j \ell m},\label{DDnT}\\
[D_{ijk}, D_{\ell m n}] = 0,\label{DD0T}\\
P_{i\ell}D_{\ell jk}+ P_{j\ell} D_{\ell ki}+P_{k\ell}D_{\ell ij}+2P_{\ell}D_{ijk}=0.\label{PDTn}
\end{gather}
\end{Theorem}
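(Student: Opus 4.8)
The plan is to notice that almost every relation in the statement has already been verified for general $R(n)$ in the preceding lemmas, so that the only genuinely new content is the pair of mixed $D$--$D$ brackets \eqref{DDnT} and \eqref{DD0T}. Relation \eqref{Ddefn} is the definition of the $D$'s; the inner relation is \eqref{intquad}; the outer relation is Lemma~\ref{outerlemma}, i.e.\ \eqref{outter}; the two-shared-index bracket \eqref{DDTn} is Lemma~\ref{DDlemma}; and \eqref{PDTn} is the relabeled form of \eqref{PDi}. The generation claim and the (anti)symmetry of the generators follow from \eqref{P's}, \eqref{R4decom} and Lemma~\ref{permutationsymofDs} exactly as in Theorem~\ref{bigthm}. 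I would therefore open the proof by collecting these, and then dispatch \eqref{DD0T} and \eqref{DDnT} in turn.

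Relation \eqref{DD0T} is immediate: when $\{i,j,k\}$ and $\{\ell,m,n\}$ are disjoint, every $P$ occurring in $D_{ijk}=\tfrac12[P_{ij},P_{jk}]$ has an index set disjoint from every $P$ occurring in $D_{\ell mn}$, so all four factors commute by \eqref{central2} and hence $[D_{ijk},D_{\ell mn}]=0$.

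For the one-shared-index bracket \eqref{DDnT} I would mimic the proof of Lemma~\ref{DDlemma}. Writing $D_{k\ell m}=\tfrac12[P_{k\ell},P_{\ell m}]$ and expanding with the Jacobi identity gives
\[
[D_{ijk},D_{k\ell m}]=\tfrac12\big([P_{\ell m},[P_{k\ell},D_{ijk}]]-[P_{k\ell},[P_{\ell m},D_{ijk}]]\big).
\]
Since $\{\ell,m\}$ is disjoint from $\{i,j,k\}$, the element $P_{\ell m}$ commutes with $D_{ijk}$ by \eqref{central2}, so the second nested bracket vanishes and only the first survives. The inner bracket $[P_{k\ell},D_{ijk}]$ shares exactly the index $k$, so it is an instance of the outer relation \eqref{outter}, obtained after using the cyclic symmetry of the $D$'s and the symmetry of the $P$'s to match the template $[P_{ij},D_{jk\ell}]$; this produces two quadratic monomials in the $P$'s. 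Bracketing once more against $P_{\ell m}$, discarding the commuting factors by \eqref{central2} and applying \eqref{Ddefn} to the remaining pairs, yields a linear combination of products $P\cdot D$ and $D\cdot P$.

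The main obstacle is combinatorial rather than conceptual: first matching $[P_{k\ell},D_{ijk}]$ to the template of \eqref{outter}, and then reconciling the intermediate answer $D_{m\ell j}P_{ki}-P_{kj}D_{m\ell i}$ with the stated right-hand side $P_{jk}D_{\ell mi}-P_{ki}D_{j\ell m}$. This final step uses the antisymmetry relations $D_{m\ell j}=-D_{j\ell m}$ and $D_{m\ell i}=-D_{\ell mi}$, together with the fact that $P_{ki}$ commutes with $D_{j\ell m}$ because their index sets are disjoint, which is exactly what permits those two factors to be reordered. Throughout, care is required to preserve the order of the noncommuting factors, since several of the intermediate products do not commute.
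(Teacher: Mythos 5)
Your proposal is correct and takes essentially the same approach as the paper: the paper likewise cites the earlier lemmas for \eqref{Ddefn}--\eqref{DDTn} and \eqref{PDTn}, and derives \eqref{DDnT} from the Jacobi identity $[D_{ijk},[P_{k\ell},P_{\ell m}]]+\cdots=0$, using the vanishing of $[P_{\ell m},D_{ijk}]$ and the outer relation \eqref{outter}, exactly as you do (your explicit tracking of the factors of $2$ and of the final index-symmetry reconciliation is, if anything, more careful). Your only deviation is proving \eqref{DD0T} by direct commutation of the four $P$-factors rather than via the Jacobi identity with $[P_{\ell m},P_{mn}]$, an equivalent and slightly more elementary argument resting on the same disjoint-index commutation facts.
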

\begin{proof}
Relations \eqref{Ddefn}--\eqref{DDTn} and \eqref{PDTn} were shown for arbitrary indices. It remains to check the final relations \eqref{DDnT} and \eqref{DD0T} which have no analog when $n=4$.

For \eqref{DDnT}, we begin with the identity
\[ [D_{ijk}, [P_{k\ell}, P_{\ell m} ]] + [P_{k\ell}, [ P_{\ell m}, D_{ijk}]] + [P_{\ell m},[D_{ijk}, P_{k\ell}]=0\]
to obtain
\begin{align*}
  [D_{ijk}, D_{k\ell m}]= [P_{\ell m}, [P_{\ell k}, D_{kij}]]
  =[ P_{\ell m}, P_{ik} P_{\ell j} - P_{jk}P_{i\ell}]
  = P_{jk}D_{\ell mi} - P_{ki}D_{j \ell m}.
\end{align*}
Similarly, \eqref{DD0T} is obtained from
\[ [D_{ijk}, [P_{\ell m }, P_{ mn} ]] + [P_{\ell m}, [ P_{ m n}, D_{ijk}]] + [P_{m n},[D_{ijk}, P_{\ell m}]]=0.\tag*{\qed}
\]\renewcommand{\qed}{}
\end{proof}

\section{Generators, presentations, and symmetries}
\label{gen}
As is apparent from the decomposition relations \eqref{R3decom} in the rank-1 case and \eqref{R4decom} for rank-2, the generators discussed above are not linearly independent. In this section, we will discuss possible choices of linear independent generators, in preparation for the following discussion of representations of the algebras. Different choices of basis will break permutation symmetry of the full algebra and we will briefly discuss what is left of the symmetry of the chosen basis.

\subsection{The rank-1 Racah algebra}
In the rank-1 case, there are 4 central elements $C_1$, $C_2$, $C_3$ and $C_{123}$. If we begin with those elements, then any choice of two of the remaining generators will be linearly independent and the third is obtained using \eqref{R3decom}. Any choice is equivalent up to permutation symmetry. A~common choice is $C_{12}$, $C_{23}$ as described in Figure~\ref{fig:pentagonC-1}. The symmetry of the generators can be realized by the interchange $1\longleftrightarrow 3$.

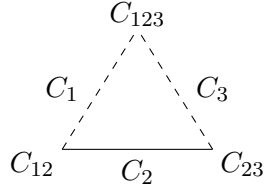
\begin{figure}[t]
\centering
\begin{tikzpicture}
\draw[black] (-1.4, -0.2) node{$C_{12}$};
\draw[black] (1.4,-0.2) node{$C_{23}$};
\draw[black] (0, 1.8) node{$C_{123}$};

\draw[black] (-1, 0.8) node{$C_1$};
\draw[black] (0, -0.3) node{$C_2$};
\draw[black] (1, 0.8) node{$C_3$};

\draw (-1, 0) -- (1, 0);
\draw[dashed] (-1, 0) -- (0,1.6);
\draw[dashed] (1, 0) -- (0,1.6);
\end{tikzpicture}
\caption{$\mathcal{D}_2$ (or $\mathcal{P}_2$, the permutation group of two elements) symmetry illustration, i.e., the dihedral group of a line. The dashed lines link two generators that commute, which solid lines link two generators that do not commute.}
\label{fig:pentagonC-1}
\end{figure}

\subsection{The rank-2 Racah algebra}
The rank-2 Racah algebra $R(4)$ is generated by 15 elements, that is 4 elements with one index~$(C_i)$, 6 elements with two indices $(C_{ij})$, 4 elements with three indices $(C_{ijk})$, and one maximal element $C_{1234}$. However, not all of these are linearly independent because of equation~\eqref{R4decom}. In fact, it is possible to reduce the set of generators to 10 elements. It is convenient to keep the 5 central terms $C_i$ and $C_{1234}$, but various choices can be made for the non-central terms. Many authors are keeping 5 two-indices elements, which possesses a dihedral-group symmetry ($\mathcal{D}_4$, isomorphic to the symmetries of a square). However, let us consider the contiguous basis, that is all the elements with continuous indices,
\[
C_1,C_2,C_3,C_4,\qquad C_{12},C_{23},C_{34},\qquad C_{123},C_{234},\qquad C_{1234}.
\]
One nice property of this choice is that a dihedral group symmetry $\mathcal{D}_5$ appears naturally. One can consider the diagram (see Figure~\ref{fig:pentagonC}).

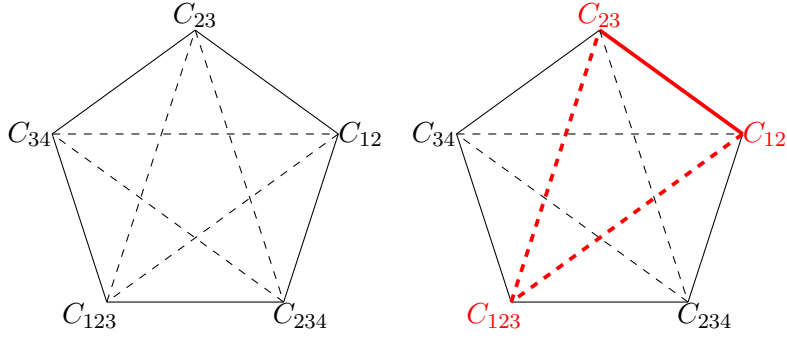
\begin{figure}[t]
\centering
\begin{tikzpicture}
\draw[black] (0, 2.2) node{$C_{23}$};
\draw[black] (-2.2, 0.6180339888) node{$C_{34}$};
\draw[black] (-1.4, -1.8) node{$C_{123}$};
\draw[black] (1.4, -1.8) node{$C_{234}$};
\draw[black] (2.2, 0.6180339888) node{$C_{12}$};
\draw (0, 2) -- (1.902113033, 0.6180339888);
\draw (1.902113033, 0.6180339888) -- (1.175570504, -1.618033989);
\draw (1.175570504, -1.618033989) -- (-1.175570504, -1.618033989);
\draw (-1.175570504, -1.618033989) -- (-1.902113033, 0.6180339888);
\draw (-1.902113033, 0.6180339888) -- (0,2);
\draw[dashed] (0, 2) -- (1.175570504, -1.618033989);
\draw[dashed] (1.175570504, -1.618033989) -- (-1.902113033, 0.6180339888);
\draw[dashed] (-1.902113033, 0.6180339888) -- (1.902113033, 0.6180339888);
\draw[dashed] (1.902113033, 0.6180339888) -- (-1.175570504, -1.618033989);
\draw[dashed] (-1.175570504, -1.618033989) -- (0, 2);
\end{tikzpicture}
\begin{tikzpicture}
\draw[red] (0, 2.2) node{$C_{23}$};
\draw[black] (-2.2, 0.6180339888) node{$C_{34}$};
\draw[red] (-1.4, -1.8) node{$C_{123}$};
\draw[black] (1.4, -1.8) node{$C_{234}$};
\draw[red] (2.2, 0.6180339888) node{$C_{12}$};
\draw[red,line width=0.5mm] (0, 2) -- (1.902113033, 0.6180339888);
\draw (1.902113033, 0.6180339888) -- (1.175570504, -1.618033989);
\draw (1.175570504, -1.618033989) -- (-1.175570504, -1.618033989);
\draw (-1.175570504, -1.618033989) -- (-1.902113033, 0.6180339888);
\draw (-1.902113033, 0.6180339888) -- (0,2);
\draw[dashed] (0, 2) -- (1.175570504, -1.618033989);
\draw[dashed] (1.175570504, -1.618033989) -- (-1.902113033, 0.6180339888);
\draw[dashed] (-1.902113033, 0.6180339888) -- (1.902113033, 0.6180339888);
\draw[dashed,red,line width=0.5mm] (1.902113033, 0.6180339888) -- (-1.175570504, -1.618033989);
\draw[dashed,red,line width=0.5mm] (-1.175570504, -1.618033989) -- (0, 2);
\end{tikzpicture}
\caption{$\mathcal{D}_5$ symmetry illustration. The dashed lines link two generators that commute, which solid lines link two generators that do not commute. On the right, the sub-Racah can be identified with red (and thicker) lines with the triangular shape from the previous section.}
\label{fig:pentagonC}
\end{figure}

In addition to the dihedral group $\mathcal{D}_5$, the full algebra is invariant under the permutation group~$\mathcal{P}_4$ acting on indices. While the $\mathcal{P}_4$ action also preserves the pentagon shape, the generators are not the contiguous basis anymore (unless one considers the identity transformation). It is interesting to note that if one combines $\mathcal{D}_5$ and $\mathcal{P}_4$, one obtains a larger symmetry group isomorphic to $\mathcal{P}_5$. This larger symmetry group appears in \cite{Crampe2023} and is presented differently, expressing the action of $\mathcal{P}_5$ on the icosidodecahedron. However, we are interested here in investigating the symmetries of the particular choice of generators and the expression of the algebra relations in terms of these generators only, so we do not carry the entire symmetry group. Hence, we focus on $\mathcal{D}_5$ and a natural expression of its action on the contiguous basis, the pentagon. 

From here, it is interesting to assign more abstract names to those elements and consider any element that can be obtained by a $\mathcal{D}_5$ or $\mathcal{P}_4$ transformation. Let us assign up to $\mathcal{D}_5$ and $\mathcal{P}_4$ transformation the following:
\begin{alignat*}{6}
&\omega_0=C_{1234},  \qquad&&\omega_1=C_1,\qquad&&\omega_2=C_2,\qquad&&\omega_3=C_3,\qquad&&\omega_4=C_4,&\\
& \Omega_0=C_{23}, \qquad&&
  \Omega_1=C_{34}, \qquad&&
  \Omega_2=C_{123}, \qquad&&
  \Omega_3=C_{234}, \qquad&&
  \Omega_4=C_{12}.&
\end{alignat*}
We can illustrate it as in Figure~\ref{fig:pentagon}.

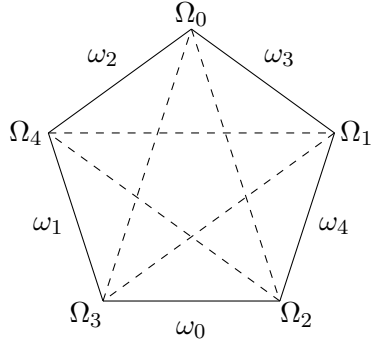
\begin{figure}[t]
\centering
\begin{tikzpicture}
\draw[black] (0, 2.2) node{$\Omega_0$};
\draw[black] (-2.2, 0.6180339888) node{$\Omega_4$};
\draw[black] (-1.4, -1.8) node{$\Omega_3$};
\draw[black] (1.4, -1.8) node{$\Omega_2$};
\draw[black] (2.2, 0.6180339888) node{$\Omega_1$};
\draw (0, 2) -- (1.902113033, 0.6180339888);
\draw (1.902113033, 0.6180339888) -- (1.175570504, -1.618033989);
\draw (1.175570504, -1.618033989) -- (-1.175570504, -1.618033989);
\draw (-1.175570504, -1.618033989) -- (-1.902113033, 0.6180339888);
\draw (-1.902113033, 0.6180339888) -- (0,2);
\draw[black] (0, -2) node{$\omega_0$};
\draw[black] (-1.902113033, -0.6180339888) node{$\omega_1$};
\draw[black] (-1.175570504, 1.618033989) node{$\omega_2$};
\draw[black] (1.175570504, 1.618033989) node{$\omega_3$};
\draw[black] (1.902113033, -0.6180339888) node{$\omega_4$};
\draw[dashed] (0, 2) -- (1.175570504, -1.618033989);
\draw[dashed] (1.175570504, -1.618033989) -- (-1.902113033, 0.6180339888);
\draw[dashed] (-1.902113033, 0.6180339888) -- (1.902113033, 0.6180339888);
\draw[dashed] (1.902113033, 0.6180339888) -- (-1.175570504, -1.618033989);
\draw[dashed] (-1.175570504, -1.618033989) -- (0, 2);
\end{tikzpicture}
\caption{$\mathcal{D}_5$ symmetry illustration.}
\label{fig:pentagon}
\end{figure}

One can also introduce new notation $\Gamma_i$ for the commutators, what we called $D_{ijk}$. They are defined as follows:
\[
\Gamma_i=\tfrac{1}{2}[\Omega_{i+2},\Omega_{i-2}],\qquad i=0,\dots,4,\qquad\mbox{mod }5.
\]
The notation may suggest that there are 5 commutators $\Gamma_i$, but they are linearly dependent,~i.e.,
$
\Gamma_0+\Gamma_1+\Gamma_2+\Gamma_3+\Gamma_4=0$.
If one consider the group action of $\mathcal{D}_5$ and $\mathcal{P}_4$, the sign of $\Gamma_i$ must be flipped depending on the group transformation. A rotation by $2n\pi/5$ from the dihedral group~$\mathcal{D}_5$ acts as follows:
\[
 \Omega_i\mapsto \Omega_{i+n}\qquad\omega_i\mapsto\omega_{i+n},\qquad \Gamma_i\mapsto \Gamma_{i+n},\qquad\mbox{mod }5,
\]
while an inversion with respect to a vertex $i$ ($\Omega_i$) and its opposite edge ($\omega_1$) acts as follows:
\[
\Omega_{i+j}\longleftrightarrow\Omega_{i-j},\qquad\omega_{i+j}\longleftrightarrow\omega_{i-j},\qquad\Gamma_{i+j}\longleftrightarrow -\Gamma_{i-j},\qquad\mbox{mod }5.
\]
Note that the mod 5 applies to the index, e.g., $\omega_6=\omega_1$.

This leads to the set of relations defining the rank-2 Racah algebra, which are invariant under the action of the groups $\mathcal{D}_5$ and $\mathcal{P}_4$.
\begin{Proposition}
All the relations between the elements of $R(4)$ can be obtained from the following relations up to the group action of $\mathcal{D}_5$.
\begin{itemize}\itemsep=0pt
  \item The definition of the $4$ commutators
  \[
    [\Omega_{i+2},\Omega_{i-2}]=2\Gamma_i,\qquad
    \sum_{i=0}^4\Gamma_i=0.
  \]
  \item The $5$ central terms
$
    [\omega_i,\omega_j]=[\omega_i,\Omega_j]=[\omega_i,\Gamma_j]=0$.
  \item If two $\Omega$s are not next to one another on the pentagon, they commute
  \begin{equation}
    [\Omega_{i-1},\Omega_{i+1}]=0.\label{OmegaGamma0}
  \end{equation}
  \item If an $\Omega$ and a $\Gamma$ share the same index, they commute $[\Omega_i,\Gamma_i]=0$.
  \item The ``inner'' equations, i.e., the equations that can be used for $R(3)$-subalgebra
  \begin{gather}
    [\Omega_i,\Gamma_{i+2}]= \Omega_i\Omega_{i+2}-\lbrace \Omega_i,\Omega_{i-1}\rbrace-\Omega_i^2+(\omega_{i+1}+\omega_{i+2}+\omega_{i+3})\Omega_i\nonumber\\
\hphantom{[\Omega_i,\Gamma_{i+2}]=}{} +(\omega_{i+2}-\omega_{i+3})\Omega_{i+2}+\omega_{i+1}(\omega_{i+3}-\omega_{i+2}).\label{inner}
  \end{gather}
  \item The ``outer'' equations
  \begin{gather}
[\Omega_i,\Gamma_{i+1}]= \sum_{k=0}^4\frac{(-1)^k}{2}\lbrace \Omega_{i+k},\Omega_{i+k-1}\rbrace+\Omega_i\Omega_{i+3}-\omega_{i+1}(\Omega_i+\Omega_{i+1})\nonumber\\
\hphantom{[\Omega_i,\Gamma_{i+1}]=}{} -\omega_{i+2}(\Omega_{i+2}+\Omega_{i+3})-\omega_{i-1}\Omega_{i-1}+\omega_{i+1}\omega_{i+2}\nonumber\\
\hphantom{[\Omega_i,\Gamma_{i+1}]=}{} +\omega_{i+1}\omega_{i-1}+\omega_{i+2}\omega_{i-1}.\label{outer}
  \end{gather}
\end{itemize}
\end{Proposition}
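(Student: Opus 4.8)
The plan is to reduce everything to the complete presentation of $R(4)$ already obtained in Theorem~\ref{bigthm} and then to organize those relations into $\mathcal{D}_5$-orbits. First I would fix the dictionary between the contiguous basis and the $\{C,D\}$ generators: the five central elements $\omega_i$ are exactly $C_1,\dots,C_4,C_{1234}$, the $\Omega_i$ are the contiguous two- and three-index generators, and each $\Gamma_i=\tfrac12[\Omega_{i+2},\Omega_{i-2}]$ is, up to the sign conventions of Lemma~\ref{permutationsymofDs}, one of the $D_{ijk}$. With this dictionary I would check that each of the six listed families is a consequence of Theorem~\ref{bigthm}: the centrality relations are \eqref{central2}; the vanishing $[\Omega_{i-1},\Omega_{i+1}]=0$ for non-adjacent vertices and $[\Omega_i,\Gamma_i]=0$ for a vertex and its own commutator both follow from \eqref{central2} together with the decomposition \eqref{R4decom} (for instance $C_{23}$ commutes with both $C_{123}$ and $C_{234}$, hence with $\Gamma_0$); the linear dependency $\sum_i\Gamma_i=0$ follows by expanding each $\Gamma_i$ via \eqref{R4decom} and Lemma~\ref{PDcomm}; and the explicit inner \eqref{inner} and outer \eqref{outer} identities are the relations \eqref{innerT} and \eqref{outerT} rewritten through the shift $P_{ij}=C_{ij}-C_i-C_j$ and the contiguous index assignment.

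The conceptual heart of the argument is to establish that $\mathcal{D}_5$ acts on $R(4)$ by algebra automorphisms, for only then does applying a group element to a valid relation produce a valid relation. I would realize the generating rotation as the $5$-cycle $(1\,2\,3\,4\,5)$ acting on an auxiliary fifth index under the self-duality $C_A\mapsto C_{A^{c}}$, the complement taken in $\{1,2,3,4,5\}$, where $C_{1234}=C_5$ and $C_{ab}=C_{(ab)^{c}}$. One checks directly that this sends $C_{23}\mapsto C_{34}\mapsto C_{123}\mapsto C_{234}\mapsto C_{12}\mapsto C_{23}$ and $C_1\mapsto C_2\mapsto\cdots\mapsto C_4\mapsto C_{1234}\mapsto C_1$, which is exactly $\Omega_i\mapsto\Omega_{i+1}$, $\omega_i\mapsto\omega_{i+1}$. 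The reflection about a vertex is the involution $(1\,4)(2\,3)$ fixing the fifth index; since each $\Gamma$ is a commutator, the reflection reverses the order of the bracket and so produces the sign flip $\Gamma_{i+j}\mapsto-\Gamma_{i-j}$ recorded in the text, independently of the parity of the permutation. The automorphism property then reduces to invariance of $R(4)$ under the complementation $C_A\mapsto C_{A^{c}}$ and under $\mathcal{P}_4$, which I would either verify on the defining relations \eqref{central2}--\eqref{quad4} or import from the $\mathcal{P}_5$ symmetry of \cite{Crampe2023}.

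With equivariance in place, the proof is completed by an orbit count. The commutators among the chosen generators are the ten pairs $[\Omega_i,\Omega_j]$, which split into one rotation orbit of adjacent pairs (giving the $\Gamma$ definitions) and one orbit of non-adjacent pairs (giving \eqref{OmegaGamma0}), and the twenty-five commutators $[\Omega_i,\Gamma_j]$, which split into the diagonal orbit $j=i$ (size $5$), the inner orbit $j=i\pm2$ of size $10$ with the reflection exchanging $+2$ and $-2$, and the outer orbit $j=i\pm1$ of size $10$. Thus the six listed families together with their $\mathcal{D}_5$-images recover every commutator of Theorem~\ref{bigthm} except the $[\Gamma_i,\Gamma_j]$ relations \eqref{DDT} and the linear relation \eqref{PDT}; but these were already shown to follow from the inner and outer relations via the Jacobi identity in Lemma~\ref{DDlemma} and the discussion preceding Theorem~\ref{bigthm}, so in an associative algebra they need not be listed separately. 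Hence the $\mathcal{D}_5$-closure of the six families presents the same algebra as Theorem~\ref{bigthm}, namely $R(4)$.

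The main obstacle I anticipate is the automorphism step: the rotation mixes one-, two-, three-, and four-index generators, so it is not an index permutation in $\mathcal{P}_4$, and its legitimacy rests on the complementation self-duality $C_A\mapsto C_{A^{c}}$, which must be justified against all the quadratic defining relations, with the attendant $\Gamma$ sign bookkeeping, rather than simply read off the pentagon picture. A secondary but nontrivial chore is the faithful translation of \eqref{innerT} and \eqref{outerT} into the unshifted forms \eqref{inner} and \eqref{outer}.
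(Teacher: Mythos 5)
Your proposal is correct, and its skeleton---translate the presentation of Theorem~\ref{bigthm} into the contiguous basis and use the $\mathcal{D}_5$ action to recover the full set of relations---is what the paper does implicitly (the Proposition carries no separate proof there; it is a repackaging of the results of Section~\ref{defn}). The genuine difference is how the rotation is legitimized. The paper never needs to establish that $\mathcal{D}_5$ acts by automorphisms: every lemma of Section~\ref{defn} is proved for arbitrary indices or subsets, so each $\mathcal{D}_5$-image of a listed relation is directly an instance of those general results---for example, the five inner equations \eqref{inner} are the quadratic relations \eqref{quad4} for the five contiguous $R(3)$ subalgebras, with the non-contiguous generator eliminated via \eqref{R4decom}---and the pentagon symmetry is then an observed feature of the resulting list. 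You instead prove equivariance first, realizing the rotation as the $5$-cycle composed with the complementation self-duality $C_A\mapsto C_{A^c}$ inside $\{1,\dots,5\}$ and one reflection as $(14)(23)\in\mathcal{P}_4$, and you finish with an orbit count; your disposal of the unlisted $[\Gamma_i,\Gamma_j]$ and \eqref{PDT} relations through the Jacobi identity coincides with the paper's own Lemma~\ref{DDlemma} and the discussion preceding Theorem~\ref{bigthm}. Your route buys a structural explanation of the pentagon, and of the $\mathcal{P}_5$ symmetry the paper mentions only in passing, but it carries a proof obligation the paper's route avoids entirely: the duality must be verified against all of \eqref{central2}--\eqref{quad4} (or imported from \cite{Crampe2023}), which you rightly flag as the main obstacle. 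Two small cautions on your dictionary step: since $\Gamma_{i\pm1}$ expands as a sum of two $D$'s, the outer equations \eqref{outer} mix instances of \eqref{innerT} and \eqref{outerT} rather than being \eqref{outerT} alone; and for the three-index $\Omega$'s the inner equations are instances of \eqref{quad4} with genuinely multi-index subsets, not merely the singleton case \eqref{innerT}. Neither affects the validity of your argument.
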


An important result of this section is that it gives a minimal set of equations necessary to verify a representation. In particular, there are $10$ relations coming from \eqref{inner} and $10$ from~\eqref{outer}, along with the requirement that the central terms are constant. With a choice of 8 inner equations and 2 outer equations, and some of the definitions, it is possible to satisfy all the commutation relation of $R(4)$.

Furthermore, the rank-2 Racah algebra possesses many rank-1 Racah subalgebras. Five of those subalgebras appear naturally when looking at Figure~\ref{fig:pentagonC}. The rank-1 Racah algebra made of~$C_{12}$,~$C_{23}$, $C_{123}$ and central terms has been highlighted in Figure~\ref{fig:pentagonC}, but any $\mathcal{D}_5$ action will also give a rank-1 Racah subalgebra, e.g., $\lbrace\! C_{23},C_{34},C_{324},C_2,C_3,C_4\!\rbrace$ or $\lbrace \!C_{123},C_{234},C_{23},C_{\!1},C_4,C_{1234}\!\rbrace$.

The last observation in this section is regarding Casimir operators of the rank-2 Racah algebra. The Casimir element of the rank-1 case (see equation \eqref{Casmir1}) can be written as
\begin{gather*}
\mathfrak{C}_2= \Gamma_2^2-\tfrac{1}{2}\big\lbrace \Omega_4^2,\Omega_0\big\rbrace-\tfrac{1}{2}\big\lbrace \Omega_0^2,\Omega_4\big\rbrace+\Omega_4^2+\Omega_0^2+\lbrace \Omega_4,\Omega_0\rbrace\\
\hphantom{\mathfrak{C}_2=}{}
+\tfrac{1}{2}(\omega_1+\omega_2+\omega_3+\Omega_2)(\lbrace \Omega_4,\Omega_0\rbrace-2\Omega_4-2\Omega_0)+(\omega_2-\omega_3)(\Omega_2-\omega_1)\Omega_4\\
\hphantom{\mathfrak{C}_2=}{}+(\omega_2-\omega_1)(\Omega_2-\omega_3)\Omega_0+(\omega_1+\omega_3)(\Omega_2+\omega_2)\\
\hphantom{\mathfrak{C}_2=}{}+(\omega_1\omega_3-\Omega_2\omega_2)(\Omega_2-\omega_1+\omega_2-\omega_3).
\end{gather*}
It is interesting to note that all five operators
\begin{gather*}
\mathfrak{C}_i= \Gamma_i^2-\tfrac{1}{2}\big\lbrace \Omega_{i+2}^2,\Omega_{i-2}\big\rbrace-\tfrac{1}{2}\big\lbrace \Omega_{i-2}^2,\Omega_{i+2}\big\rbrace+\Omega_{i+2}^2+\Omega_{i-2}^2+\lbrace \Omega_{i+2},\Omega_{i-2}\rbrace\\
\hphantom{\mathfrak{C}_i=}{} +\tfrac{1}{2}(\omega_{i-1}+\omega_i+\omega_{i+1}+\Omega_i)(\lbrace \Omega_{i+2},\Omega_{i-2}\rbrace-2\Omega_{i+2}-2\Omega_{i-2})\\
\hphantom{\mathfrak{C}_i=}{}+(\omega_i-\omega_{i+1})(\Omega_i-\omega_{i-1})\Omega_{i+2}+(\omega_i-\omega_{i-1})(\Omega_i-\omega_{i+1})\Omega_{i-2}\\
\hphantom{\mathfrak{C}_i=}{}+(\omega_{i-1}+\omega_{i+1})(\Omega_i+\omega_i)+(\omega_{i-1}\omega_{i+1}-\Omega_i\omega_i)(\Omega_i-\omega_{i-1}+\omega_i-\omega_{i+1}),
\end{gather*}
which can be obtained using the $\mathcal{D}_5$ symmetries, are Casimir operators. This matches the results in \cite{Crampe2023,GVZ2013,GZ88}. Each Casimir operator $\mathfrak{C}_i$, $i=0,\dots,4$ corresponds to the Casimir operator of a~rank-1 Racah subalgebra. As Cramp\'e \textit{et al.}~noted in \cite{Crampe2023}, it is quite surprising that the Casimir operator of a $R(3)$ subalgebra is also a Casimir operator of the whole algebra $R(4)$.

\section{Representations theory}
\label{repn}
\subsection{The rank-1 Racah algebra}\label{rankone}
Let us recall some of the representation theory of the rank-one universal Racah algebra. This representation theory has been studied by several authors but here we focus on the theory explained by Huang and Bockting-Conrad \cite{HBC2020}. As discussed in the previous section, we make a choice of basis using linearly independent generators $A=C_{23}$ and $B=C_{12}$. The relations are then
\begin{gather}
[A,B]=2D,\qquad
[A,D]= \lbrace A,B\rbrace+A^2-\delta A +\alpha,\label{Pres11}\\
[D,B]=\lbrace A,B\rbrace+B^2-\delta B -\beta,\label{Pres13}
\end{gather}
with $\alpha$, $\beta$ and $\delta$ central elements in the algebra. Compared to the previous section, $A$, $B$, $D$, $\alpha$, $\beta$ and $\delta$ are
\begin{alignat*}{5}
&A=C_{23},\qquad&& \alpha=(C_2-C_3)(C_1-C_{123}),\qquad&&
B=C_{12}, \qquad&& \beta=(C_1-C_2)(C_3-C_{123}),&\\
&D=D_{123},\qquad&&\delta=C_{123}+C_1+C_2+C_3.&&&&&
\end{alignat*}

The authors give an infinite-dimensional $R(3)$ module \cite[Proposition~3.1]{HBC2020} and later show that this module is isomorphic to the quotient of the universal enveloping algebra by an ideal, similar to the construction of Verma modules for Lie algebra representations.

In this case, the action of $A$ and $B$ on states $\vert j\rangle$ in the $R(3)$-module are as follows:
\begin{gather}
A\vert j\rangle=\theta_j\vert j\rangle+\vert j+1\rangle,\label{actionA}\\
B\vert j\rangle=\theta^*_j\vert j\rangle+\varphi_j\vert j-1\rangle,\label{actionB}
\end{gather}
where the raising coefficient of $A$ has been normalized to 1. In other words, $A$ acts centrally and east, while $B$ acts centrally and west, considering that the states are ordered such that $\vert j+1\rangle$ is east of $\vert j\rangle$. The coefficients $\theta_j$, $\theta_j^*$ and $\varphi_j$ can be found in the paper \cite{HBC2020} or by setting the extra parameter $s$ to zero in Appendix \ref{AppCoeff}.

The central operators $C_i$ and $C_{123}$ can be expressed as
\begin{gather*}
  C_1=c_1(c_1-1),\qquad C_2=c_2(c_2-1),\qquad C_3=c_3(c_3-1),\\
  C_{123}=(c_1+c_2+c_3+N)(c_1+c_2+c_3+N+1).
\end{gather*}

Note that in the extension to rank-2, $C_{123}$ will no longer be central. However, this Racah algebra of rank~1 will appear as a Racah subalgebra in the rank-2 case.

\subsection{The rank-2 Racah algebra}

In the remainder of the paper, we make a choice of basis for our module, the contiguous basis, breaking both the permutation symmetry of the original tensor product representation and the dihedral symmetry of the previous section. However, it is possible to use the $\mathcal{D}_5$ and $\mathcal{P}_4$ transformations to create a representation in terms of other generators.

For convenience, we choose the non-central generators as follows:
\[ \Omega_0 = C_{23}, \qquad \Omega_1=C_{34}, \qquad \Omega_2=C_{123}, \qquad \Omega_3=C_{234}, \qquad \Omega_4=C_{12},\]
and we express the central terms using constants $c_i$, i.e.,
\[
\omega_0=C_{1234}=c_0(c_0-1),\qquad \omega_i=C_i=c_i(c_i-1),\qquad i=1,\dots,4.
\]
We also have
\begin{alignat*}{4}
   &\Gamma_0=\tfrac{1}{2}[C_{123},C_{234}],\qquad && \Gamma_1=\tfrac{1}{2}[C_{234},C_{12}],\qquad &&\Gamma_2=\tfrac{1}{2}[C_{12},C_{23}],&\\
  & \Gamma_3=\tfrac{1}{2}[C_{23},C_{34}],\qquad &&\Gamma_4=\tfrac{1}{2}[C_{34},C_{123}].&&&
\end{alignat*}

Notice that $\Omega_0$ and $\Omega_4$ form an $R(3)$ subalgebra together with $\Omega_2$, so we extend the notation from Section \ref{rankone}, the rank-1 case, to construct the representation. We assume that $\Omega_0$ and~$\Omega_4$ act similarly to $A$ and $B$ in \eqref{actionA}, \eqref{actionB}, that is
$
C_{12}\vert t,s\rangle=\varphi_{t,s}\vert t-1,s\rangle+\theta_{t,s}\vert t,s\rangle$, $ C_{23}\vert t,s\rangle=\theta^*_{t,s}\vert t,s\rangle+\vert t+1,s\rangle$,
where the coefficient of the raising term of $C_{23}$ is normalized to 1. We do not make any additional assumptions on $\Omega_1$, $\Omega_2$ and $\Omega_3$ except if applied on a state, they send a state to a linear combination of itself and its 8 closest neighbours, e.g.,
\begin{gather*}
C_{123}\vert t,s\rangle = \hat{\mu}_{t,s}\vert t-1,s+1\rangle+\hat{\nu}_{t,s}\vert t,s+1\rangle+\hat{\xi}_{t,s}\vert t+1,s+1\rangle+\mu_{t,s}\vert t-1,s\rangle\\
\hphantom{C_{123}\vert t,s\rangle =}{}
+\nu_{t,s}\vert t,s\rangle+\xi_{t,s}\vert t+1,s\rangle+\check{\mu}_{t,s}\vert t-1,s-1\rangle+\check{\nu}_{t,s}\vert t,s-1\rangle+\check{\xi}_{t,s}\vert t+1,s-1\rangle.
\end{gather*}

Since $C_{12}$, $C_{23}$ and $C_{123}$ are the non-central generator of a rank-1 Racah subalgebra, one can use the presentation equations of the rank-1 Racah \eqref{Pres11}--\eqref{Pres13} to determine the coefficients. Considering that $C_{12}$ and $C_{23}$ can only move a state east-west, this implies that $C_{123}$ can only move east-west. Furthermore, by solving all the coefficients so they satisfy \eqref{Pres11}--\eqref{Pres13}, one gets that $C_{123}$ is forced to be central, that is
$
C_{123}\vert t,s\rangle =\nu_{t,s}\vert t,s\rangle$.

The remaining two generators are of the form
\begin{gather*}
C_{34}\vert t,s\rangle =\hat{\phi}^*_{t,s}\vert t-1,s+1\rangle+\hat{\vartheta}^*_{t,s}\vert t,s+1\rangle+\hat{\psi}^*_{t,s}\vert t+1,s+1\rangle+\phi^*_{t,s}\vert t-1,s\rangle+\vartheta^*_{t,s}\vert t,s\rangle\\
\phantom{C_{34}\vert t,s\rangle =}{}+\psi^*_{t,s}\vert t+1,s\rangle
+\check{\phi}^*_{t,s}\vert t-1,s-1\rangle+\check{\vartheta}^*_{t,s}\vert t,s-1\rangle+\check{\psi}^*_{t,s}\vert t+1,s-1\rangle,\\
C_{234}\vert t,s\rangle =\hat{\phi}_{t,s}\vert t-1,s+1\rangle+\hat{\vartheta}_{t,s}\vert t,s+1\rangle+\hat{\psi}_{t,s}\vert t+1,s+1\rangle+\phi_{t,s}\vert t-1,s\rangle+\vartheta_{t,s}\vert t,s\rangle\\
\phantom{C_{234}\vert t,s\rangle =}{}+\psi_{t,s}\vert t+1,s\rangle
+\check{\phi}_{t,s}\vert t-1,s-1\rangle+\check{\vartheta}_{t,s}\vert t,s-1\rangle+\check{\psi}_{t,s}\vert t+1,s-1\rangle.
\end{gather*}
Using the commutation relations \eqref{OmegaGamma0}, it is possible to show that the coefficients $\phi_{t,s}$, \smash{$\hat{\phi}_{t,s}$}, \smash{$\check{\phi}_{t,s}$} and $\psi^*_{t,s}$, \smash{$\hat{\psi}^*_{t,s}$}, \smash{$\check{\psi}^*_{t,s}$} are zero. Hence, we have
\begin{gather*}
C_{34}\vert t,s\rangle =\hat{\phi}^*_{t,s}\vert t-1,s+1\rangle+\hat{\vartheta}^*_{t,s}\vert t,s+1\rangle
+\phi^*_{t,s}\vert t-1,s\rangle+\vartheta^*_{t,s}\vert t,s\rangle\\
\phantom{C_{34}\vert t,s\rangle =}{}+\check{\phi}^*_{t,s}\vert t-1,s-1\rangle+\check{\vartheta}^*_{t,s}\vert t,s-1\rangle,\\
C_{234}\vert t,s\rangle =\hat{\vartheta}_{t,s}\vert t,s+1\rangle+\hat{\psi}_{t,s}\vert t+1,s+1\rangle
+\vartheta_{t,s}\vert t,s\rangle+\psi_{t,s}\vert t+1,s\rangle\\
\phantom{C_{234}\vert t,s\rangle =}{}+\check{\vartheta}_{t,s}\vert t,s-1\rangle+\check{\psi}_{t,s}\vert t+1,s-1\rangle.
\end{gather*}
It is interesting to note that $C_{12}$ and $C_{34}$ both act west and centrally (in combination with north-south) and they commute, while $C_{23}$ and $C_{234}$ both act east and centrally (in combination with north-south) and they also commute.

From here, we will make some assumptions on the module. We will assume that the action of any generator is zero below the line $t=0$, or in other words all states $|{-}t,s\rangle=0$ for $t>0$. In addition, we assume that any state above the line $t=s$ vanishes, that is $\vert t,t+n\rangle=0$ for~${n>0}$. Those two assumptions imply that we have an infinite triangular lattice of states.

Solving all the recurrence equations is quite a task, so we will not show all the calculations. However, it is convenient to first solve the Racah subalgebra involving $C_{12},C_{23}$ and $C_{123}$, i.e., using the commuting and the two inner relations, and then continue with the other ones. The coefficients can be found in Appendix~\ref{AppCoeff}. To illustrate the direction of the action and the meaning of each coefficient, we produced the following diagram for each generators.
The other elements of the rank-2 Racah algebra can be constructed using equation \eqref{R4decom}.

\begin{figure}[h!]
  \centering
\begin{minipage}{0.3\textwidth}
\begin{center}
$C_{12}$\\ \vspace{2mm}
~\\
\begin{tikzpicture}
\filldraw (1.1,0) circle (1pt);
\draw[-stealth] (1.1,0,0) -- (0.1,0,0);
\node at (0,-0.5){$\varphi_{t}$};
\node at (1.1,-0.5){$\theta_{t}$};
\end{tikzpicture}\\ \vspace{2mm}
west
\end{center}
\end{minipage}
\begin{minipage}{0.3\textwidth}
\begin{center}
$C_{23}$\\ \vspace{2mm}
~\\
\begin{tikzpicture}
\filldraw (0,0) circle (1pt);

\draw[-stealth] (0,0) -- (1.1,0);
\node at (0,-0.5){$\theta^*_{t}$};
\node at (1.1,-0.5){1};
\end{tikzpicture}\\ \vspace{2mm}
east
\end{center}
\end{minipage}
\begin{minipage}{0.3\textwidth}
\begin{center}
$C_{123}$\\ \vspace{0.5mm}
~\\
$\circlearrowleft$\\ \vspace{0.5mm}
\hspace{0.5mm}$\nu_{s}$\\ \vspace{0.5mm}
~\\
centrally
\end{center}
\end{minipage}
\vspace{3mm}
~\\
\begin{minipage}{0.4\textwidth}
\begin{center}
$C_{34}$\\
\begin{tikzpicture}
\filldraw (1,0) circle (1pt);

\draw[-stealth] (1,0) -- (0,0);

\draw[-stealth] (1,0) -- (1,1);
\draw[-stealth] (1,0) -- (1,-1);

\draw[-stealth] (1,0) -- (0.3,0.7);
\draw[-stealth] (1,0) -- (0.3,-0.7);

\node at (-0.1,-0.9){$\check{\phi}^*_{t,s}$};
\node at (-0.3,0){$\phi^*_{t,s}$};
\node at (-0.1,0.9){$\hat{\phi}^*_{t,s}$};

\node at (1,1.3){$\hat{\vartheta}^*_{t,s}$};
\node at (1.4,0){$\vartheta^*_{t,s}$};
\node at (1,-1.3){$\check{\vartheta}^*_{t,s}$};
\end{tikzpicture}\\
left burst
\end{center}
\end{minipage}
\begin{minipage}{0.4\textwidth}
\begin{center}
$C_{234}$\\
~\\
\begin{tikzpicture}
\filldraw (0,0) circle (1pt);

\draw[-stealth] (0,0) -- (1,0);

\draw[-stealth] (0,0) -- (0,1);
\draw[-stealth] (0,0) -- (0,-1);

\draw[-stealth] (0,0) -- (0.7,0.7);
\draw[-stealth] (0,0) -- (0.7,-0.7);

\node at (0,-1.3){$\hat{\vartheta}_{t,s}$};
\node at (-0.4,0){$\vartheta_{t,s}$};
\node at (0,1.3){$\check{\vartheta}_{t,s}$};

\node at (1,1){$1$};
\node at (1.4,0){$\psi_{s}$};
\node at (1,-1){$\hat{\psi}_{s}$};
\end{tikzpicture}\\
right burst
\end{center}
\end{minipage}
\caption{Direction of the action of each non-diagonal generator with the dependency on the parameters~$t$ and $s$.}
\end{figure}
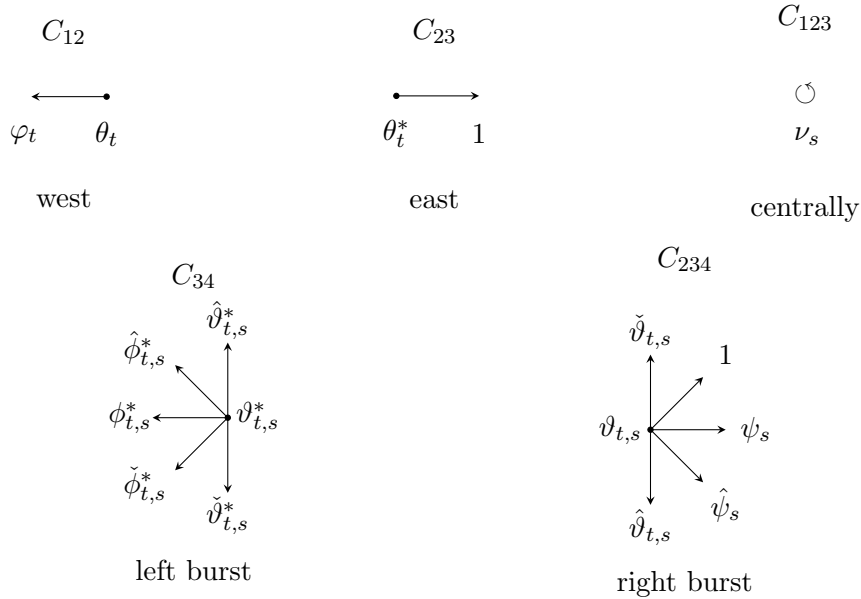

\section{Conclusion}
First, we have given a ``universal'' definition of the Racah algebra in terms of generators and relations that can be seen as an extension of the rank-1 Racah algebra using sets instead of indices. The main properties and relations generating $R(4)$ and $R(n)$ are summarized in Theorems \ref{bigthm} and \ref{bigthmn2}, respectively. Additional properties can be found throughout Section~\ref{defn}. Most significantly, we obtain the commutation relations for general $P_{ij}$ and $D_{jk\ell}$  as well as higher-order commutation relations, for example,~\eqref{DD}, given in other definitions~\cite{CGPAV22}. Similarly, we obtain the identity \eqref{PDi} first identified in \cite{CGPAV22} as arising from associativity in the algebra. We further show that in~$R(4)$ no additional relations arise from the Jacobi identities in the algebra. This result is important for our investigation into the representation theory of~$R(4)$ and in particular for future work describing Verma modules and PBW-type bases for the algebra, following on similar work as in \cite{BCH2020, HBC2020, HBC2021} and conjectures in \cite{BGVV2018}.

Next, we have explored the symmetries of some $R(4)$ generators, that is the contiguous basis. While this basis is unique, that is the set of generators is composed of operators with continuous indices, i.e., $\lbrace C_1 ,  C_2 ,  C_3 ,  C_4 ,  C_{12} ,  C_{23} ,  C_{34} ,  C_{123} ,  C_{234} ,  C_{1234}\rbrace$, many other sets of generators are isomorphic. By using the permutation symmetries $\mathcal{P}_4$ on the indices, one can get another set of generators with the same properties and commutation relations. With this symmetry, one can make a partial abstraction of the choice of generators to create a presentation and a representation. That special set of basis possesses an addition symmetry group, which is an automorphism of the set, that is the dihedral group $\mathcal{D}_5$. Indeed, the symmetry of the generators can be illustrated using a pentagon, as shown in Figure \ref{fig:pentagon}. This choice of basis leads to elegant properties. Among others, the rank-1 Racah algebra $R(3)$ using the indices $\lbrace1,2,3\rbrace$ is a subalgebra of $R(4)$, and using the $\mathcal{D}_5$ symmetry, 4 additional rank-1 Racah subalgebras arise naturally. Furthermore, if one apply $\mathcal{D}_5$ symmetries to the Casimir operator of $R(3)$, then 4 other Casimir operators arise.

Finally, we provide a novel representation for $R(4)$ using a split basis, an extension of the representation by Huang and Bockting-Conrad \cite{HBC2020}. The explicit coefficients from the action of the generators can be found in Appendix~\ref{AppCoeff}. A representation of $R(4)$ already exists \cite{Crampe2023} by Cramp\'{e}, Frappat and Ragoucy. However that representation and our representation differ. The main difference resides in the fact that the representation by Cramp\'{e} et al.\ requires that~2 generators are diagonalizable, while ours only requires one generator to be diagonalizable, which is a consequence of our choice of a split basis. In addition, Cramp\'e et al.\ assumed that their representation is from the so-called special Racah, while we did not impose this constraint. However, using our representation, we calculated the values of the Casimir operators~$\mathfrak{C}_i$,~${i=0,\dots,4}$, and we obtained that all of them vanish. Either our hypothesis led us to assume implicitly special Racah, or there are no other cases, that is the so-called special Racah is in fact Racah itself.\looseness=1

\appendix

\section{Proofs of double commutators}\label{AppendixA}
Before we begin the computations, a bit of discussion of the strategy. In each case, we try to reduce the equations to triples of $P$'s and in the case that a term contains commuting $P$'s we try to move that pair to the left. In line with this, we move inner relations to the right and outer relations to the left. It will also be useful to the following forms of the inner relation \eqref{innerT}:
\begin{gather*}
 [P_{ij}, D_{ ijk}] = P_{ij} (P_{jk}- P_{ik}) +2D_{ijk} - 2P_i P_{jk} +2P_jP_{ik},\\
[P_{ij}, D_{ ijk}] = (P_{jk}- P_{ik}) P_{ij} +2D_{ijk} - 2P_i P_{jk} +2P_jP_{ik}.\end{gather*}

\subsection[Case \{ P\_il, D\_ ijk, D\_ jkl\}]{Case $\boldsymbol{\{ P_{i \ell}, D_{ ijk}, D_{ jk\ell}\} }$}
We begin with the Jacobi relation for the set $\{ P_{i \ell}, D_{ ijk}, D_{ jk\ell}\}$
\begin{gather}
 [ P_{i \ell}, [ D_{ ijk}, D_{jkl}]]  \nonumber\\
 \qquad=  [P_{i\ell}, P_{jk}(D_{ji\ell} +D_{ki\ell}) ] \nonumber\\
 \qquad=  P_{jk}  ( P_{i\ell}( P_{j \ell}+ P_{k \ell} -P_{ij} -P_{ik}) +2D_{i \ell j} +2D_{ i\ell k}  )
   - 2P_i P_{jk}(P_{\ell j}+P_{\ell k}) \nonumber \\
   \phantom{\qquad  =}{}+ 2P_{\ell}P_{jk}(P_{ij} +P_{ik})\nonumber \\
\qquad=  P_{jk} P_{i\ell}(P_{j \ell}+P_{k\ell}-P_{ij} -P_{ik} )
  -2P_{jk}(D_{j i \ell} + D_{k i \ell})- 2 P_iP_{jk}  (P_{j \ell} +P_{k \ell} ) \nonumber \\
 \phantom{\qquad  =}{} +2P_{\ell}P_{jk}  (P_{ij} +P_{jk} ).\label{Acom}
 \end{gather}
The second term is
\begin{gather}
   [ D_{ ijk}, [D_{jk\ell}, P_{i \ell}]]\nonumber\\
   \qquad=  [[  P_{i\ell},D_{\ell jk}], D_{ijk}]  = [P_{ik} P_{\ell j} -P_{\ell k} P_{ij}, D_{ijk}]
   =  [ P_{\ell j}P_{ik} -P_{\ell k} P_{ij}, D_{ijk}]\nonumber\\
   \qquad=  [ P_{\ell j}, D_{jki}]P_{ik} - [P_{\ell k}, D_{kij}]P_{ij} + P_{\ell j} [P_{ki}, D_{kij} ]-P_{\ell k}[P_{ij}, D_{ijk}]\nonumber \\
 \qquad  =  (P_{i \ell}P_{jk}-P_{ij}P_{kl})P_{ik} - (P_{\ell j} P_{ik}- P_{jk}P_{i \ell}) P_{ij} \nonumber \\
   \phantom{\qquad  =}{} + P_{\ell j} ( P_{ki}( P_{ij}-P_{jk}) - 2D_{ijk} -2P_k P_{ij}+2P_iP_{jk} )\nonumber \\
      \phantom{\qquad  =}{}  - P_{\ell k} ( P_{ij}( P_{jk}-P_{ki}) - 2D_{ijk} -2P_i P_{jk}+2P_jP_{ki} )\nonumber \\
 \qquad =  P_{i \ell}P_{jk}(P_{ik}+P_{ij}) - (P_{ik} P_{j\ell}+P_{ij}P_{k \ell}) P_{jk} \nonumber\\
     \phantom{\qquad  =}{}   +2(P_{k \ell}-P_{j \ell})D_{ijk}+ 2P_i(P_{j\ell}+P_{k\ell}) P_{jk} - 2P_{j}P_{k\ell}P_{ik}-2P_k P_{\ell j} P_{ij}.\label{Bcom}
   \end{gather}
The third term is the same as the second, up to a sign and interchange of $i$ and $\ell$ and so reads
\begin{gather}
[ D_{jk\ell}, [ P_{i\ell}, D_{ijk}]]= - [[ P_{i\ell}, D_{ijk}], D_{\ell jk}] \nonumber\\
\hphantom{[ D_{jk\ell}, [ P_{i\ell}, D_{ijk}]]}{}
=  -P_{i \ell}P_{jk}(P_{k\ell}+P_{j\ell}) +(P_{k\ell} P_{ij}+P_{j\ell}P_{ik}) P_{jk}-2(P_{ik}-P_{ij})D_{jk\ell}\nonumber\\
\hphantom{[ D_{jk\ell}, [ P_{i\ell}, D_{ijk}]]=}{}- 2P_{\ell}(P_{ij}+P_{ik}) P_{jk} +2P_{j}P_{ik}P_{k\ell}+2P_k P_{i j} P_{j\ell}.\label{Ccom}
 \end{gather}
 Note that lines \eqref{Acom}, \eqref{Bcom} and \eqref{Ccom} sum to 0.
 Thus, finally our Jacobi identity becomes
 \begin{gather*}
  [ P_{i \ell}, [ D_{ ijk}, D_{jkl}]] +[ D_{ ijk}, [D_{jk\ell}, P_{i \ell}]] + [ D_{jk\ell}, [ P_{i\ell}, D_{ijk}]]\\
 \qquad = -2P_{jk}(D_{j i \ell} + D_{k i \ell})+2(P_{k \ell}-P_{j \ell})D_{ijk} -2(P_{ik}-P_{ij})D_{jk\ell}
 + 4P_j D_{ik\ell}+4P_k D_{ij\ell}.
 \end{gather*}
 Here we have used the fact that $[ P_{j\ell} +P_{k \ell}, P_{jk}]=0$ to cancel the $P_i$ and $P_{\ell}$ terms. Reordering and grouping gives a sum of terms which vanish due to \eqref{PDT}.

\subsection[Case \{ P\_jk, D\_ ijk, D\_ jkl\}]{Case $\boldsymbol{\{ P_{jk}, D_{ ijk}, D_{ jk\ell}\}}$}
This computation is somewhat similar to the first, although possibly more straightforward.
The first term vanishes
\begin{gather*}
 [ P_{jk}, [ D_{ ijk}, D_{jkl}]] = [P_{jk}, P_{jk}(D_{ji\ell} +D_{ki\ell}) ]
 = P_{jk}( [P_{jk},D_{k i\ell} ]+ [P_{kj}, D_{ji\ell}])=0
 \end{gather*}
due to Lemma \ref{allPDrel|}.

 The second term gives
\begin{gather*}
   [ D_{ ijk}, [D_{jk\ell}, P_{jk}]]\\
   \qquad =  [[  P_{jk},D_{ jk \ell }], D_{ijk}]
   = [ (P_{k\ell} -P_{j\ell})P_{jk} +2D_{jk\ell} -2P_jP_{k \ell}+2P_k P_{j \ell}, D_{ijk}]\nonumber \\
  \qquad = ( P_{ik}P_{j\ell}+P_{ij}P_{k \ell}-2P_{i\ell}P_{jk}) P_{jk}\nonumber \\
  \phantom{\qquad =}{}   + (P_{k \ell}-P_{j\ell})\left((P_{ik} -P_{ij})P_{jk} +2D_{ijk} -2P_jP_{ik}+2P_kP_{ij}\right) \nonumber \\
  \phantom{\qquad =}{}    - 2(D_{j i\ell}+D_{ki\ell}) P_{jk} - 2P_j[P_{\ell k},D_{kij}]+2P_k[ P_{\ell j}, D_{jki}]\nonumber\\
 \qquad =  -2P_{i\ell}P_{jk}^2 +(P_{k\ell}P_{ik}+ P_{j\ell} P_{ij}) P_{jk}
   + 2(P_{k \ell}-P_{j\ell})D_{ijk} - 2D_{i\ell j}P_{jk}- 2D_{i\ell k}P_{jk}\nonumber\\
   \phantom{\qquad =}{}    - 2P_j\left( P_{k\ell}P_{ij} -P_{i\ell}P_{jk}\right) + 2P_k\left( P_{i\ell}P_{jk} - P_{j\ell}P_{ij}\right).\label{Btrip}
   \end{gather*}

   The third term is the same as the second, up to a sign and interchange of $i$ and $\ell$ and so reads\looseness=-1
\begin{align}
[ D_{jk\ell}, [ P_{jk}, D_{ijk}]]={}&- [[ P_{jk}, D_{ijk}], D_{\ell jk}]
 = 2P_{i\ell}P_{jk}^2 -(P_{ik}P_{k\ell}+ P_{ij} P_{j\ell}) P_{jk}\nonumber \\
 &- 2(P_{ik }-P_{ij})D_{jk\ell} - 2D_{\ell i j}P_{jk}+2D_{\ell i k}P_{jk}\nonumber\\
 & +2P_j\left( P_{ik}P_{j\ell} -P_{i\ell}P_{jk}\right) - 2P_k\left( P_{i\ell}P_{jk} - P_{ij}P_{j\ell}\right). \label{Ctrip}
\end{align}
Notice that the triple of $P$'s in \eqref{Btrip} and \eqref{Ctrip} will sum together to give $2D_{\ell ki}P_{jk}+2D_{\ell ji}P_{jk}$ which will cancel with subsequent terms.

Putting everything together, we arrive at
\begin{gather*}
 [ P_{jk}, [ D_{ ijk}, D_{jkl}]] +[ D_{ ijk}, [D_{jk\ell}, P_{jk}]] + [ D_{jk\ell}, [ P_{jk}, D_{ijk}]]\\
 \qquad = 2(P_{k \ell}-P_{j\ell})D_{ijk} - 2(P_{ik }-P_{ij})D_{jk\ell} - 2D_{\ell i j}P_{jk}+2D_{\ell i k}P_{jk} +4P_j .
 \end{gather*}

\subsection[Case \{ P\_i j , D\_ ijk, D\_ jkl\}]{Case $\boldsymbol{\{ P_{i j }, D_{ ijk}, D_{ jk\ell}\}} $}
 This is the most involved case. We compute
 \begin{gather*}
 [ P_{i j}, [ D_{ ijk}, D_{jkl}]] = [P_{ij}, (D_{ji\ell} +D_{ki\ell})P_{jk} ]
= 2 (D_{ji\ell}+D_{ki\ell})D_{ijk} + [P_{ij},D_{ji \ell} +D_{ki\ell} ]P_{jk}.
\end{gather*}
 Here we have used the alternate ordering of \eqref{DDT} given in \eqref{DDop}.

 The first term on the right has not been involved in any of the previous computations as all of the previous computations have decreased the order of terms to at most triples of $P$'s, or equivalently a $P$ and $D$. We can remove this term by taking the commutator of a $PD$ relation~\eqref{PDT} (again with the opposite ordering) with a $P$ as in
 \[
   [P_{jk}, 2D_{jk\ell}P_i + D_{k\ell i}P_{ij}+ D_{ \ell ji}P_{ik}+D_{jki}P_{i\ell}]=0,
 \]
 which gives
 \begin{gather*}
  2P_i [P_{jk}, D_{jk\ell}] + [P_{jk}, D_{ k\ell i} ] P_{ij}+ [P_{jk}, D_{\ell j i}] P_{ik}+ [P_{jk}, D_{jki}]P_{i\ell} \\
  \qquad {}+ 2 D_{k\ell i}D_{kji}+2 D_{\ell j i}D_{jki}=0.
 \end{gather*}
 Thus, after a permutation of indices, we obtain
 \begin{gather}
   2(D_{ki\ell}+D_{ji\ell})D_{ijk} =
   [P_{jk}, D_{ki\ell}]P_{ij}+ [P_{kj}, D_{j\ell i}] P_{ik}\label{outerDD}\\
\hphantom{2(D_{ki\ell}+D_{ji\ell})D_{ijk} =}{} -2P_i [P_{jk},D_{jk\ell}]- [P_{jk}, D_{j k i}] P_{i\ell}.\label{innerDD}
 \end{gather}
 Note that line \eqref{innerDD} contains only inner commutation relations the \eqref{outerDD} are the outer ones. We continue this strategy in the final form of the first term in the Jacobi identity
\begin{gather*}
 [ P_{i j}, [ D_{ ijk}, D_{jkl}]] =  [P_{ji}, D_{i\ell k}]P_{jk}+[P_{jk}, D_{ki\ell}]P_{ij}+ [P_{kj}, D_{j\ell i}]P_{ik}  \\
\hphantom{[ P_{i j}, [ D_{ ijk}, D_{jkl}]] =}{} - [P_{ij}, D_{ij\ell}] P_{jk}-2P_i [P_{jk},D_{jk\ell}]- [P_{jk}, D_{jk i}]P_{i\ell}.
 \end{gather*}
 The second term of the Jacobi identity is
\begin{align*}
 [ D_{ ijk}, [D_{jk\ell}, P_{i j}]] &=  [[P_{i j},D_{jk\ell} ], D_{ijk}] = [P_{ i \ell}P_{jk}-P_{j\ell}P_{ik}, D_{ijk}]  \\
& =[ P_{\ell i}, D_{i jk}] P_{jk}-[P_{\ell j}, D_{jki}]P_{ik}
 + P_{i \ell}[P_{jk}, D_{jki}]+P_{j\ell} [P_{ik}, D_{ikj}].
 \end{align*}
 And the third
\begin{gather*}
 [ D_{jk\ell}, [P_{i j},D_{ ijk}]] =  -[[P_{i j},D_{ijk} ], D_{jk\ell}]
 = - [P_{ij}P_{jk}-P_{ik}P_{ij} -2P_iP_{jk}+2P_jP_{ik}, D_{jk\ell}]\nonumber\\
\hphantom{[ D_{jk\ell}, [P_{i j},D_{ ijk}]]}{} =  -[P_{ij}, D_{jk\ell}]P_{jk} +[P_{ik}, D_{jk\ell}]P_{ij} + P_{ik}[ P_{ij}, D_{jk\ell}] -2P_j [ P_{ik}, D_{k\ell j}]\nonumber\\
\hphantom{[ D_{jk\ell}, [P_{i j},D_{ ijk}]]=}{}- P_{ij}[P_{jk}, D_{jk\ell}]+2P_i [ P_{jk}, D_{jk\ell}].
 \end{gather*}
Thus, the Jacobi identity can be expressed as
\[ [P_{jk}, 2D_{jk\ell}P_i + D_{k\ell i}P_{ij}+ D_{ \ell ji}P_{ik}+D_{jki}P_{i\ell}] = {\rm In} +{\rm Out}, \]
with $\rm In$ and ${\rm Out}$ a collection of terms involving inner and outer quadratic relations, respectively, that is,
 \begin{gather*}
{\rm In}= - P_{ij}[P_{jk}, D_{jk\ell}]+ [P_{i \ell}[P_{jk}, D_{jki}]]+P_{j\ell} [P_{ik}, D_{ikj}]-[P_{ij}, D_{ij\ell}] P_{jk}, \\
   {\rm Out}= \left( [P_{ji}, D_{i\ell k}]+[ P_{\ell i}, D_{i jk}] -[P_{ij}, D_{jk\ell}]\right) P_{jk}
    +\left([P_{jk}, D_{ki\ell}] +[P_{ik}, D_{jk\ell}]\right)P_{ij}\\
   \phantom{ Out= }{}  + \left( [P_{kj}, D_{j\ell i}]+[P_{\ell j}, D_{jik}]\right)P_{ik}+ P_{ik}[ P_{ij}, D_{jk\ell}] -2P_j [ P_{ik}, D_{k\ell j}].
    \end{gather*}

Beginning with the inner relation terms, we can use the lower-order Jacobi identity proven earlier
$ [P_{i \ell}[P_{jk}, D_{jki}]] =[P_{jk}, [P_{i\ell}, D_{jki}]]$,
to simplify the term in ${\rm In}$,
 \begin{align}
  [P_{i \ell}[P_{jk}, D_{jki}]] ={}& [P_{jk}, P_{ij}P_{k\ell}-P_{ik}P_{j\ell}]
   = P_{jk}P_{ij}P_{k\ell}- P_{jk}P_{j\ell}P_{ik} -(P_{ij}P_{k\ell}-P_{ik}P_{j\ell})P_{jk} \nonumber \\
   ={}&2D_{kji}P_{k\ell}+2P_{ij}D_{jk\ell}- 2D_{kj\ell}P_{ik}-2P_{j\ell}D_{jki}
   \nonumber \\
   ={}& 2D_{jik}P_{k\ell} +2 D_{\ell jk}P_{ki} +2P_{ij}D_{jk\ell}+2P_{\ell j}D_{jik}.\label{innerpart1}
 \end{align}
 The other 3 terms from ${\rm In}$ give
 \begin{gather*}
   - P_{ij}[P_{jk}, D_{jk\ell}] + P_{j\ell} [P_{ik}, D_{ikj}]-[P_{ij}, D_{ij\ell}] P_{jk}\\
  \qquad  = P_{ij}( P_{j\ell}P_{jk}-P_{jk}P_{k\ell} - 2P_kP_{j\ell} +2P_j P_{k\ell}) +P_{j\ell}( P_{ik}P_{jk}-P_{ij}P_{ik}-2P_iP_{jk}+2P_kP_{ij})
\\
 \phantom{\qquad  =}{} + (P_{i\ell}P_{ij}-P_{ij}P_{j\ell} -2P_jP_{i\ell}+2P_iP_{j\ell})P_{jk}\\
   \qquad = (P_{ik}P_{j\ell} -P_{ij}P_{k\ell})P_{jk}-2P_{ij}D_{jk\ell}
     +(P_{i\ell}P_{jk}- P_{ik}P_{j\ell})P_{ij} +2P_{i\ell}D_{ijk}- 2P_{j\ell}D_{jik}\\
    \phantom{\qquad  =}{}   +4P_{k}D_{\ell j i}+2P_j(P_{ij}P_{k\ell} - P_{i\ell}P_{jk}).
 \end{gather*}
 Replacing with outer relations and permuting gives
 \begin{gather}
   - P_{ij}[P_{jk}, D_{jk\ell}]+P_{j\ell} [P_{ik}, D_{ikj}]-[P_{ij}, D_{ij\ell}] P_{jk} \nonumber \\
   \qquad=[P_{jk}, D_{ki\ell}]P_{jk}+[P_{ij}, D_{jk\ell}]P_{ij}
    -2P_{ij}D_{jk\ell} +2P_{\ell i }D_{ijk}- 2P_{j\ell}D_{jik}\nonumber\\
  \phantom{\qquad  =}{} +4P_{k}D_{\ell j i}+2P_j[P_{ik}, D_{k\ell j}].\label{innerpart2}
   \end{gather}
   Combining \eqref{innerpart1} and \eqref{innerpart2} gives
   \[
    {\rm In} = [P_{jk}, D_{ki\ell}]P_{jk}+[P_{ij}, D_{jk\ell}]P_{ij}+2P_{\ell i }D_{ijk}+2D_{\ell i k}P_{kj}+2P_j[P_{ik}, D_{k\ell j}].
    \]
   Here we have used the identity
   \[ 4P_{k}D_{\ell j i} + 2D_{jik}P_{k\ell} +2 D_{\ell jk}P_{ki}+2D_{i\ell k}P_{kj}=0.\]
Moving on to the ${\rm Out}$ term. We use Lemma \ref{allPDrel|} several times to simplify
   \begin{gather*}
   {\rm Out}=[ P_{jk}, D_{k\ell i}] P_{jk}+[P_{ij}, D_{j\ell k}]P_{ij}
 -[ P_{ij}, D_{jk\ell}]P_{ik}+ P_{ik}[ P_{ij}, D_{jk\ell}] -2P_j [ P_{ik}, D_{k\ell j}].
 \end{gather*}
    Finally, we compute
    \[
    P_{ik}[ P_{ij}, D_{jk\ell}] =P_{ik} P_{i\ell} P_{jk}- P_{ik}P_{j\ell}P_{ik}\ell= 2D_{ki\ell}P_{jk}+2P_{i\ell} D_{ikj} + [ P_{ij}, D_{jk\ell}] P_{ik},
    \]
    leaving
   \[
 {\rm Out}= [ P_{jk}, D_{k\ell i}] P_{jk}+[P_{ij}, D_{j\ell k}]P_{ij} +2D_{ki\ell}P_{jk}+2P_{i\ell} D_{ikj} -2P_j [ P_{ik}, D_{k\ell j}].
   \]
  Hence, we have shown ${\rm In} +{\rm Out} =0$ and the Jacobi identity holds for these generators without additional constraints.

 \subsection[Case \{ D\_ijl, D\_ijk, D\_jkl\}]{Case $\boldsymbol{\{ D_{ij\ell}, D_{ijk}, D_{jk\ell}\}}$}
 The final case will be for 3 distinct $D$'s. Without loss of generality we chose the indices as above and consider the quantity
 \[
  [ D_{ij\ell}, [D_{ijk}, D_{jk\ell}]] + [D_{ijk},[D_{jk\ell}, D_{ij\ell}]] +[D_{jk\ell}, [D_{ij\ell},D_{ijk}]].
  \]
 The first term becomes
 \begin{align*}
    [ D_{ij\ell}, [D_{ijk}, D_{jk\ell}]]&=  [ D_{ij\ell}, P_{jk}(D_{ji\ell}+D_{ki\ell})]
    = -[P_{kj},D_{j\ell i}] (D_{ji\ell}+D_{ki\ell})
-P_{jk}[D_{ji\ell}, D_{i\ell k} ]\\
&=  -[P_{jk}, D_{k\ell i}](D_{ij\ell}+D_{ik\ell}) -P_{jk}P_{i\ell}(D_{ijk}+D_{jk\ell}).
 \end{align*}
 A similar computation for the second term gives
 \begin{eqnarray*}
   [D_{ijk},[D_{jk\ell}, D_{ij\ell}]]=[P_{\ell j}, D_{jki}](D_{ijk}-D_{ik\ell})+P_{j\ell}P_{ik}(D_{jk\ell}-D_{ij\ell}). \end{eqnarray*}
And the third
\[ [D_{jk\ell}, [D_{ij\ell},D_{ijk}]] =[P_{ij}, D_{jk\ell}](D_{ik\ell}+D_{jk\ell})+P_{ij}P_{k\ell}(D_{ijk}+D_{ij\ell}).\]
   Combining these three terms gives the desired cancelations.

\section{Coefficients of the action of the proposed representation}\label{AppCoeff}
Below are the coefficients of the action for each generator from $R(4)$, where
\begin{gather*}
n_M=N+\sum_{i\in M}c_i,\qquad M\subseteq\lbrace1,2,3,4\rbrace,
\\
C_1\vert t,s\rangle =c_1(c_1-1)\vert t,s\rangle,
\qquad
C_2\vert t,s\rangle =c_2(c_2-1)\vert t,s\rangle,
\qquad
C_3\vert t,s\rangle =c_3(c_3-1)\vert t,s\rangle,
\\
C_4\vert t,s\rangle =c_4(c_4-1)\vert t,s\rangle,
\qquad
C_{1234}\vert t,s\rangle =n_{1234}(n_{1234}-1)\vert t,s\rangle,
\\
C_{12}\vert t,s\rangle =\varphi_{t,s}\vert t-1,s\rangle+\theta_{t,s}\vert t,s\rangle,\\
\varphi_{t,s}=(s-t)(N+1-t)(N+2c_2-t)(2n_{123}-t-s-1),\\
\theta_{t,s}=(n_{12}-t)(n_{12}-t-1),
\\
C_{23}\vert t,s\rangle=\theta^*_{t,s}\vert t,s\rangle+\vert t+1,s\rangle,\qquad
\theta^*_{t,s}=(n_{23}-t)(n_{23}-t-1),
\\
C_{123}\vert t,s\rangle =\nu_{t,s}\vert t,s\rangle,\qquad
\nu_{t,s}=(n_{123}-s)(n_{123}-s-1),
\\
C_{34}\vert t,s\rangle =\hat{\phi}^*_{t,s}\vert t-1,s+1\rangle+\hat{\vartheta}^*_{t,s}\vert t,s+1\rangle
+\phi^*_{t,s}\vert t-1,s\rangle+\vartheta^*_{t,s}\vert t,s\rangle
+\check{\phi}^*_{t,s}\vert t-1,s-1\rangle\\
\phantom{C_{34}\vert t,s\rangle =}{}+\check{\vartheta}^*_{t,s}\vert t,s-1\rangle,\\
\hat{\phi}^*_{t,s}=-(s-t)(s-t+1)(N+1-t)(N+2c_2-t),\\
\phi^*_{t,s}=\varphi_{t,s}(1-\psi_{t,s})=\left(\dfrac{1}{2}-\dfrac{n_{123}(n_{123}+2c_4-1)}{2(n_{123}-s)(n_{123}-s-1)}\right)\\
\phantom{\phi^*_{t,s}=}{}\times(s-t)(N+1-t)(N+2c_2-t)(2n_{123}-t-s-1),\\
\check{\phi}^*_{t,s}=\check{\vartheta}_{t,s}^*\dfrac{(N+1-t)(2n_{123}-t-s)(N+2c_2-t)}{2n_{12}-t-s}\\
\phantom{\check{\phi}^*_{t,s}}{}=-s\dfrac{(s+2c_4-1)(2n_{123}-s)(2n_{1234}-s-1)(2n_{123}-t-s-1)}{4(2n_{123}-2s+1)(2n_{123}-2s-1)(n_{123}-s)^2}\\
\phantom{\check{\phi}^*_{t,s}=}{} \times(N+1-t)(2n_{123}-t-s)(N+2c_2-t),\\
\hat{\vartheta}^*_{t,s}=-(s-t)(s-2c_3-t+1),\\
\vartheta^*_{t,s}=-\dfrac{n_{123}(n_{123}-t)(n_{12}-c_3-t-1)(n_{123}+2c_4-1)}{2(n_{123}-s)(n_{123}-s-1)}+\dfrac{n_{1234}(n_{1234}-1)}{2}\\
\phantom{\vartheta^*_{t,s}=}{}-\dfrac{(n_{123}-s)(n_{123}-s-1)}{2}+\dfrac{(n_{12}-t)(n_{12}-t-1)}{2}+\dfrac{c_3(c_3-1)+c_4(c_4-1)}{2},\\
\check{\vartheta}^*_{t,s}=-s\dfrac{(2n_{123}-t-s-1)(2n_{12}-s-t)(s+2c_4-1)(2n_{123}-s)(2n_{1234}-s-1)}{4(2n_{123}-2s+1)(2n_{123}-2s-1)(n_{123}-s)^2},
\\
C_{234}\vert t,s\rangle =\hat{\vartheta}_{t,s}\vert t,s+1\rangle+\hat{\psi}_{t,s}\vert t+1,s+1\rangle
  +\vartheta_{t,s}\vert t,s\rangle+\psi_{t,s}\vert t+1,s\rangle
\\
  \phantom{C_{234}\vert t,s\rangle =}{} +\check{\vartheta}_{t,s}\vert t,s-1\rangle+\check{\psi}_{t,s}\vert t+1,s-1\rangle,\\
  \hat{\vartheta}_{t,s}=(s-t)(2n_{23}-s-t-1),\\
  \vartheta_{t,s}=\dfrac{n_{123}(n_{123}-t-1)(n_{23}-c_1-t)(n_{123}+2c_4-1)}{2(n_{123}-s)(n_{123}-s-1)}+\dfrac{n_{1234}(n_{1234}-1)}{2}\\
\phantom{\vartheta_{t,s}=}{}-\dfrac{(n_{123}-s)(n_{123}-s-1)}{2}+\dfrac{(n_{23}-t)(n_{23}-t-1)}{2}+\dfrac{c_1(c_1-1)+c_4(c_4-1)}{2},\\
  \check{\vartheta}_{t,s}=\check{\psi}_{t,s}(s-2c_1-t)(2n_{123}-t-s-1)\\
  \phantom{  \check{\vartheta}_{t,s}}{}=s\dfrac{(-2c_1+s-t)(2n_{123}-t-s-1)(s+2c_4-1)(2n_{123}-s)(2n_{1234}-s-1)}{4(2n_{123}-2s+1)(2n_{123}-2s-1)(n_{123}-s)^2},\\
  \hat{\psi}_{t,s}=1,\qquad
  \psi_{t,s}=\dfrac{1}{2}+\dfrac{n_{123}(n_{123}+2c_4-1)}{2(n_{123}-s)(n_{123}-s-1)},\\
  \check{\psi}_{t,s}=\dfrac{s(s+2c_4-1)(2n_{123}-s)(2n_{1234}-s-1)}{4(2n_{123}-2s+1)(2n_{123}-2s-1)(n_{123}-s)^2}.
\end{gather*}

\subsection*{Acknowledgements}

The authors would like to thank Nicolas Cramp\'{e} and \'{E}ric Ragoucy for the fruitful discussion on the subject. The authors would also like to thank the anonymous referees for their comments and suggestions.
SP would like to acknowledge the Simons Foundation Collaboration grant \#3192112 as well as support as a CRM-Simons professorship and the hospitality of the CRM for a fruitful visit while much of the research was conducted.

\pdfbookmark[1]{References}{ref}
\LastPageEnding


\begin{thebibliography}{99}
\footnotesize\itemsep=0pt

\bibitem{Askey}
Askey R., Wilson J., A set of orthogonal polynomials that generalize the
 {R}acah coefficients or {$6-j$} symbols,  \href{https://doi.org/10.1137/0510092}{\textit{SIAM~J. Math. Anal.}}
 \textbf{10} (1979), 1008--1016.

\bibitem{BCH2020}
Bockting-Conrad S., Huang H.-W., The universal enveloping algebra of
 {$\mathfrak{sl}_2$} and the {R}acah algebra,  \href{https://doi.org/10.1080/00927872.2019.1670199}{\textit{Comm. Algebra}}
 \textbf{48} (2020), 1022--1040, \href{https://arxiv.org/abs/1907.02135}{arXiv:1907.02135}.

\bibitem{Crampe2023}
Cramp\'e N., Frappat L., Ragoucy E., Representations of the rank two {R}acah
 algebra and orthogonal multivariate polynomials,  \href{https://doi.org/10.1016/j.laa.2023.01.017}{\textit{Linear Algebra
 Appl.}} \textbf{664} (2023), 165--215, \href{https://arxiv.org/abs/2206.01031}{arXiv:2206.01031}.

\bibitem{CGPAV22}
Cramp\'e N., Gaboriaud J., d'Andecy L.P., Vinet L., Racah algebras, the
 centralizer {$Z_n(\mathfrak {sl}_2)$} and its {H}ilbert--{P}oincar\'e series,
  \href{https://doi.org/10.1007/s00023-021-01152-y}{\textit{Ann. Henri Poincar\'e}} \textbf{23} (2022), 2657--2682,
 \href{https://arxiv.org/abs/2105.01086}{arXiv:2105.01086}.

\bibitem{BGVV2018}
De~Bie H., Genest V.X., van~de Vijver W., Vinet L., A higher rank {R}acah
 algebra and the {$\mathbb Z^n_2$} {L}aplace--{D}unkl operator,
  \href{https://doi.org/10.1088/1751-8121/aa9756}{\textit{J.~Phys.~A}} \textbf{51} (2018), 025203, 20~pages,
 \href{https://arxiv.org/abs/1610.02638}{arXiv:1610.02638}.

\bibitem{Vinet2020}
De~Bie H., Iliev P., van~de Vijver W., Vinet L., The {R}acah algebra: an
 overview and recent results, in Lie {G}roups, {N}umber {T}heory, and {V}ertex
 {A}lgebras, \textit{Contemp. Math.}, Vol. 768, \href{https://doi.org/10.1090/conm/768/15450}{American Mathematical Society},
 RI, 2021, 3--20, \href{https://arxiv.org/abs/2001.11195}{arXiv:2001.11195}.

\bibitem{GVZ2013}
Genest V.X., Vinet L., Zhedanov A., The equitable {R}acah algebra from three
 {$\mathfrak{su}(1,1)$} algebras,  \href{https://doi.org/10.1088/1751-8113/47/2/025203}{\textit{J.~Phys.~A}} \textbf{47} (2014),
 025203, 12~pages, \href{https://arxiv.org/abs/1309.3540}{arXiv:1309.3540}.

\bibitem{GZ88}
Granovskii Ya.A., Zhedanov A.S., Nature of the symmetry group of the
 {$6j$}-symbol, \textit{Soviet Phys. JETP} \textbf{94} (1988), 1982--1985.

\bibitem{GLZ}
Granovskii Ya.I., Lutzenko I.M., Zhedanov A.S., Mutual integrability, quadratic
 algebras, and dynamical symmetry, \href{https://doi.org/10.1016/0003-4916(92)90336-K}{\textit{Ann. Physics}} \textbf{217} (1992),
 1--20.

\bibitem{GZ93}
Granovskii Ya.I., Zhedanov A.S., Hidden symmetry of the {R}acah and
 {C}lebsch--{G}ordan problems for the quantum algebra $\mathfrak{sl}_q(2)$,
 \href{https://arxiv.org/abs/hep-th/9304138}{arXiv:hep-th/9304138}.

\bibitem{HBC2020}
Huang H.-W., Bockting-Conrad S., Finite-dimensional irreducible modules of the
 {R}acah algebra at characteristic zero,  \href{https://doi.org/10.3842/SIGMA.2020.018}{\textit{SIGMA}} \textbf{16} (2020),
 018, 17~pages, \href{https://arxiv.org/abs/1910.11446}{arXiv:1910.11446}.

\bibitem{HBC2021}
Huang H.-W., Bockting-Conrad S., The {C}asimir elements of the {R}acah algebra,
  \href{https://doi.org/10.1142/S0219498821501358}{\textit{J.~Algebra Appl.}} \textbf{20} (2021), 2150135, 22~pages,
 \href{https://arxiv.org/abs/1711.09574}{arXiv:1711.09574}.

\bibitem{Kirillov}
Kirillov A.N., Reshetikhin N.Yu., Representations of the algebra
 {${U}_q(\mathfrak{sl}(2))$}, {$q$}-orthogonal polynomials and invariants of
 links, in Infinite-{D}imensional {L}ie {A}lgebras and {G}roups, \textit{Adv.
 Ser. Math. Phys.}, Vol.~7, World Scientific Publishing, Teaneck, NJ, 1989,
 285--339.

\bibitem{Levy}
L\'evy-Leblond J.-M., L\'evy-Nahas M., Symmetrical coupling of three angular
 momenta,  \href{https://doi.org/10.1063/1.1704786}{\textit{J.~Math. Phys.}} \textbf{6} (1965), 1372--1380.

\bibitem{Louck}
Louck J.D., Recent progress toward a theory of tensor operators in the unitary
 groups,  \href{https://doi.org/10.1119/1.1976225}{\textit{Amer.~J. Phys.}} \textbf{38} (1970), 3--42.

\bibitem{Maple}
Maple 2019, {M}aplesoft, a division of Waterloo Maple Inc., Waterloo, Ontario.

\bibitem{SP}
Post S., Racah polynomials and recoupling schemes of {$\mathfrak{su}(1,1)$},
  \href{https://doi.org/10.3842/SIGMA.2015.057}{\textit{SIGMA}} \textbf{11} (2015), 057, 17~pages, \href{https://arxiv.org/abs/1504.03705}{arXiv:1504.03705}.

\bibitem{Racah}
Racah G., Theory of complex spectra.~{II}, \href{https://doi.org/10.1103/PhysRev.62.438}{\textit{Phys. Rev.}} \textbf{62}
 (1942), 9--10.

\bibitem{Racah2}
Racah G., Lectures on {L}ie groups, in Group {T}heoretical {C}oncepts and
 {M}ethods in {E}lementary {P}article {P}hysics, Quantum Physics and Its
 Applications, Gordon and Breach, New York, 1964, 1--36.

\end{thebibliography}
\end{document}